\newtheorem{theorem}{Theorem}
\newtheorem{lemma}{Lemma}
\newtheorem{proof}[theorem]{Proof}
\begin{document}
\title{How Fast Can Dense Codes Achieve the \\ Min-Cut Capacity of Line Networks?${}^{\dagger}$\thanks{$^{\dagger}$This paper is an extended version of a manuscript which has been submitted to IEEE ISIT 2012.}}

\author{\IEEEauthorblockN{Anoosheh~Heidarzadeh and Amir H. Banihashemi\\}
\IEEEauthorblockA{\small{Department of Systems and Computer Engineering, Carleton University, Ottawa, ON, Canada}\\
}}

\maketitle
\thispagestyle{empty}

\begin{abstract}In this paper, we study the coding delay and the average coding delay of random linear network codes (dense codes) over line networks with deterministic regular and Poisson transmission schedules. We consider both lossless networks and networks with Bernoulli losses. The upper bounds derived in this paper, which are in some cases more general, and in some other cases tighter, than the existing bounds, provide a more clear picture of the speed of convergence of dense codes to the min-cut capacity of line networks.\end{abstract}

\section{Introduction}
Random linear network codes (dense codes) achieve the capacity over various network scenarios, in particular, unicast over line networks. Lun \emph{et al.} \cite{LMKE:2008} showed that dense codes achieve the capacity of networks with transmission and loss schedules specified by stochastic processes with bounded average rate. They however did not discuss the speed of convergence of such codes to the capacity.

The speed of convergence of dense codes to the capacity of networks with arbitrary deterministic transmission schedules was studied in~\cite{MHL:2006} and~\cite{HBJ:2011}. It is not, however, straightforward to apply the results to the networks with probabilistic schedules.

In the literature, the coding delay or the average coding delay is often used to measure the speed of convergence of a code to the capacity of a network. The \emph{coding delay} of a code over a network with a given traffic (with a given schedule of transmissions and losses) is the minimum time that the code takes to transmit all the message vectors from the source to the sink over the network. The \emph{average coding delay} of a code over a network with respect to a class of traffics is the average of the coding delays of the code with respect to all the traffics.\footnote{The coding delay of a class of codes over a class of traffics is a random variable due to the randomness in both the code and the traffic. The average coding delay is the coding delay averaged out over the traffics but not the codes, and hence is a random variable due to the randomness in the code.}

Pakzad \emph{et al.} \cite{PFS:2005} studied the average coding delay of dense codes over the networks with deterministic regular transmission opportunities and Bernoulli losses, where the special case of two identical links in tandem was considered. The analysis however did not provide any insight about how the coding delay (which is random with respect to both the codes and the traffics) can deviate from the average coding delay (which is random with respect to the codes but not the traffics).

More recently, Dikaliotis \emph{et al.} \cite{DDHE:2009} studied both the average coding delay and the coding delay over networks similar to those in \cite{PFS:2005}, under the assumption that all the packets are innovative.\footnote{A collection of packets is ``innovative'' if their global encoding vectors are linearly independent.} This is not however a valid assumption in practice, where the field size is finite and can be as small as two.

In this paper, we study the coding delay and the average coding delay of dense codes over the field of size two ($\mathbb{F}_2$). The analysis however can be generalized to the finite fields of larger size. We consider both lossless networks and networks with Bernoulli losses. We also study both deterministic regular and Poisson transmission schedules.

The main contributions of this paper are:

\begin{itemize}
\item For networks with deterministic regular transmission opportunities and Bernoulli losses, we derive upper bounds on the average coding delay of dense codes tighter than what were presented in \cite{PFS:2005,DDHE:2009}.
\item We show that, for such networks, the coding delay may have a large deviation from the average coding delay in both cases of identical and non-identical links. For non-identical links, our upper bound on such a deviation is smaller than what was previously shown in \cite{DDHE:2009}. It is worth noting that, for identical links, upper bounding such a deviation has been an open problem (see \cite{DDHE:2009}).
\item We generalize the results to the networks with Poisson transmission schedules for both lossless networks and networks with Bernoulli losses.
\end{itemize}

\section{Network Model and Problem Setup}
We consider a line network of length $L$, where the $L+1$ nodes $\{v_i\}_{0\leq i\leq L}$ are connected in tandem. The underlying problem is unicast: The source node $v_0$ is given a message of $k$ vectors from a vector space over $\mathbb{F}_2$, and the sink node $v_{L}$ demands to have all the message vectors.

Each node transmits a (coded) packet at each transmission opportunity in discrete-time where the number of transmissions per transmission opportunity is one. The points in time at which the transmissions occur over each link follow a stochastic point process. The processes specifying the transmissions over different links are considered to be independent.

Each packet transmission is either successful or fails. In the latter case, the packet is erased. We consider two scenarios: (i) lossless, where all packet transmissions are successful, and (ii) lossy, where all packet transmissions are subject to independent erasures over the same link or different links. The traffic over a link is fully described by the processes describing the schedule of transmissions and by the loss model.

The links are assumed to be delay-free, i.e., the arrival time of a successful packet at a receiving node is the same as the departure time of the packet from the transmitting node.

In this paper, we use the notions of the coding delay and the average coding delay in a probabilistic fashion as follows:

\noindent For some fixed $0<\epsilon<1$, the \emph{$\epsilon$-constrained coding delay} of a class of codes over a network with a class of traffics is defined as the infimum value of $N \in \mathbb{Z}$ such that the coding delay of a randomly chosen code over the network with a randomly chosen traffic is larger than $N$ with probability (w.p.) bounded above by (b.a.b.) $\epsilon$. The \emph{$\epsilon$-constrained average coding delay} of a class of codes over a network with respect to a class of traffics is defined as the infimum value of $N \in \mathbb{Z}$ such that the average coding delay of a randomly chosen code over the network with respect to the class of traffics is larger than $N$ w.p. b.a.b. $\epsilon$. We often drop the term ``$\epsilon$-constrained'' for simplicity unless there is a danger of confusion.

The goal in this paper is to upper bound the coding delay and the average coding delay of dense codes over networks with two types of transmission schedules and two types of loss models specified below.

The transmission schedules are described by (i) a deterministic process where at each time unit there is a transmission opportunity at each node (such a schedule is referred to as \emph{deterministic regular}), or (ii) a Poisson process with parameter $\lambda_i: 0<\lambda_i<1$, over the $i\textsuperscript{th}$ link, where $\lambda_i$ is the average number of transmission opportunities per time unit.

The loss models are described by (i) a deterministic process where each packet transmission is successful (such a model is referred to as \emph{lossless}), or (ii) a Bernoulli process with parameter $p_i: 0<p_i<1$, over the $i\textsuperscript{th}$ link, where $p_i$ is the average number of successes per transmission opportunity.

\section{Deterministic Regular Lossless Traffic}\label{sec:LosslessRegularTraffic}
In a dense coding scheme, the source node, at each transmission opportunity, transmits a packet by randomly linearly combining the message vectors, and each non source non-sink (interior) node transmits a packet by randomly linearly combining its previously received packets. The vector of coefficients of the linear combination associated with a packet is called the \emph{local encoding vector} of the packet, and the vector of the coefficients representing the mapping between the message vectors and a coded packet is called the \emph{global encoding vector} of the packet. The global encoding vector of each packet is assumed to be included in the packet header. The sink node can recover all the message vectors as long as it receives an innovative collection of packets of the size equal to the number of message vectors at the source node.

The entries of the global encoding vectors of a collection of packets are independent and uniformly distributed (i.u.d.) Bernoulli random variables as long as the local encoding vectors of the packets are linearly independent. Such packets, called \emph{dense}, are of main importance in our analysis.

The first step is to lower bound the size of a maximal collection of dense packets at the sink node until a certain decoding time. We, next, lower bound the probability that the underlying collection includes a sufficient number of packets with linearly independent global encoding vectors.

Let $Q$ be a matrix over $\mathbb{F}_2$. A maximal collection of rows in $Q$ with i.u.d. entries is called \emph{dense}. The matrix $Q$ is called a \emph{dense matrix} if all its rows form a dense collection. We refer to the number of rows in a dense collection of rows in $Q$ as the \emph{density} of $Q$, denoted by $\mathcal{D}(Q)$, and refer to each row in such a collection as a \emph{dense row}.

Let $\mathcal{O}_{i}$ ($\mathcal{I}_i$) be the set of labels of the packets transmitted (received) by the $i\textsuperscript{th}$ node and let $\mathcal{D}_i$ be the set of labels of the dense packets at the $i\textsuperscript{th}$ node. Let $r$ and $d$ be the size of $\mathcal{O}_{i}$ and $\mathcal{D}_i$, respectively. The global encoding vectors of the received packets at a node form the rows of the decoding matrix at that node. Let $Q_{i+1}$ and $Q_{i}$ be the decoding matrices at the $(i+1)\textsuperscript{th}$ and $i\textsuperscript{th}$ nodes, respectively, and $T_{i}$ be a matrix over $\mathbb{F}_2$ such that $Q_{i+1}=T_{i}Q_{i}$. The rows of $T_i$ are the local encoding vectors of the packets transmitted by the $i\textsuperscript{th}$ node, i.e., $(T_i)_{n,j}=\lambda_{n,j}$, $\forall {n}\in \mathcal{O}_{i}$ and $\forall {j}\in \mathcal{I}_{i}$, where $\lambda_n$ is the local encoding vector of the $n\textsuperscript{th}$ packet. Let $Q'_{i}$ be $Q_{i}$ restricted to its dense rows, i.e., $Q'_{i}$ is dense and has $d$ rows ($\mathcal{D}(Q_{i})=d$). We can write $Q_{i+1}=T'_{i}Q'_{i}$, where $T'_{i}$, the \emph{transfer matrix} at the $i\textsuperscript{th}$ node, is a matrix over $\mathbb{F}_2$ with $d$ columns: $(T'_{i})_{{n},j}={\lambda}_{{n},j}+\sum_{\ell\in \mathcal{I}_{i}\setminus \mathcal{D}_{i}}\lambda_{n,\ell}\gamma_{\ell,j}$, $\forall n\in \mathcal{O}_i$, $\forall j\in\mathcal{D}_i$ and $\{\gamma_{\ell,j}\}$ are in $\mathbb{F}_2$ satisfying $\sum_{j\in \mathcal{D}_i}\gamma_{\ell,j}\lambda_{j,k}=\lambda_{\ell,k}$, $\forall k\in\mathcal{I}_{i}$.


The $n\textsuperscript{th}$ row of $T^{\prime}_{i}$ indicates the labels of dense packets at the $i\textsuperscript{th}$ node which contribute to the $n\textsuperscript{th}$ packet sent by the $i\textsuperscript{th}$ node, and the $j\textsuperscript{th}$ column of $T^{\prime}_{i}$ indicates the labels of packets sent by the $i\textsuperscript{th}$ node to which the $j\textsuperscript{th}$ dense packet contributes. Let $\mathcal{T}_{i_{\text{row}}}^{\prime (n)}$ ($\mathcal{T}_{i_{\text{col}}}^{\prime (j)}$) be the set of labels of i.u.d. entries in the $n\textsuperscript{th}$ row ($j\textsuperscript{th}$ column) of $T^{\prime}_i$. Thus, $|{\mathcal{T}}_{i_{\text{row}}}^{\prime (n)}|\geq \max\{n-r+d,0\}$ (in particular, the first $\max\{n-r+d,0\}$ entries of the $n\textsuperscript{th}$ row are i.u.d.). Similarly, $|{\mathcal{T}}_{i_{\text{col}}}^{\prime (j)}|\geq d-j+1$ (in particular, the last $d-j+1$ entries of the $j\textsuperscript{th}$ column are i.u.d.).

Let $\text{rank}(T)$ denote the rank of a matrix $T$ over $\mathbb{F}_2$. The following result is then useful to lower bound the density of the decoding matrix $Q_{i+1}$ in terms of $\text{rank}(T^{\prime}_{i})$.\footnote{The proofs of the lemmas in this section can be found in~\cite{HBJ:2011}.}


\begin{lemma}\label{lem:DensityTM} Let $M$ be a dense matrix over $\mathbb{F}_2$, and $T$ be a matrix over $\mathbb{F}_2$, where the number of rows in $M$ and the number of columns in $T$ are equal. If $\text{rank}(T)\geq \gamma$, then $\mathcal{D}(TM)\geq \gamma$.\end{lemma}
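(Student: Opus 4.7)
The plan is to extract $\gamma$ rows of $TM$ from $\gamma$ linearly independent rows of $T$, and then show those rows of $TM$ are jointly i.u.d., which by definition establishes $\mathcal{D}(TM)\geq\gamma$.

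First, since $\mathrm{rank}(T)\geq\gamma$, I would fix indices $i_1,\ldots,i_\gamma$ such that the rows $t_{i_1},\ldots,t_{i_\gamma}$ of $T$ are linearly independent over $\mathbb{F}_2$. Call $S$ the $\gamma\times m$ submatrix of $T$ formed by these rows (where $m$ is the number of rows of $M$). The corresponding rows of $TM$ are precisely the rows of $SM$. It therefore suffices to prove that $SM$ has all rows forming a dense (i.u.d.) collection whenever $S$ has full row rank $\gamma$.

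Second, I would analyze $SM$ column by column. Because $M$ is dense, all $mc$ entries of $M$ (with $c$ the number of columns of $M$) are i.u.d.\ Bernoulli, so in particular the columns $M^{(1)},\ldots,M^{(c)}$ are mutually independent random vectors in $\mathbb{F}_2^m$, each uniform. The $j\textsuperscript{th}$ column of $SM$ is $SM^{(j)}$. Since $S:\mathbb{F}_2^m\to\mathbb{F}_2^\gamma$ is surjective (its rows are linearly independent), the pushforward of the uniform distribution on $\mathbb{F}_2^m$ under $S$ is uniform on $\mathbb{F}_2^\gamma$; equivalently, the $\gamma$ entries in each column $SM^{(j)}$ are i.u.d.\ Bernoulli. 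Mutual independence of the columns of $M$ then gives mutual independence of the columns of $SM$, so collectively all $\gamma c$ entries of $SM$ are i.u.d.

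Hence the $\gamma$ selected rows of $TM$ form a dense collection, which yields $\mathcal{D}(TM)\geq\gamma$. There is no real obstacle; the only thing to watch is to keep the ``joint'' nature of ``i.u.d.\ across all rows simultaneously'' explicit, rather than arguing row by row, since the claim ``rows form a dense collection'' requires joint uniformity and independence of the entries, not just marginal uniformity. The column-wise decomposition above handles this cleanly.
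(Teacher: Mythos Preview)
The paper does not actually supply a proof of this lemma; the footnote immediately preceding it defers the proofs to~\cite{HBJ:2011}. Your argument is correct and is the natural one: pick $\gamma$ linearly independent rows of $T$ to form a full-row-rank matrix $S$, note that the corresponding rows of $TM$ are exactly $SM$, and use surjectivity of $x\mapsto Sx$ on $\mathbb{F}_2^{m}\to\mathbb{F}_2^{\gamma}$ to push the uniform distribution on each column of $M$ forward to the uniform distribution on $\mathbb{F}_2^{\gamma}$, with independence across columns inherited from $M$. Your explicit attention to joint (rather than merely marginal) uniformity via the column decomposition is exactly the right bookkeeping. The only implicit hypothesis worth flagging is that $T$ is treated as fixed (or independent of $M$) so that the selected row indices do not depend on the randomness in $M$; this is how the lemma is stated and how it is used in the paper, since the local encoding vectors forming $T'_i$ are fresh randomness independent of the dense rows of $Q'_i$.
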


The rank of a matrix $T$ similar to that of the transfer matrix $T'$ specified earlier can be lower bounded as follows.

\begin{lemma}\label{lem:SingleT} Let $T$ be an $n\times d$ ($d\leq n$) matrix over $\mathbb{F}_2$ such that for any $1\leq j\leq d$, at least $d-j+1$ entries of its $j\textsuperscript{th}$ column are i.u.d.. For every integer $0\leq \gamma\leq d-1$, \[\Pr\{\text{rank}(T)<d-\gamma\}\leq (d-\gamma)2^{-(\gamma+1)}.\]\end{lemma}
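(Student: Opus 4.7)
The plan is to process the columns of $T$ from right to left and, at each step, bound the probability that the current column lies in the span of the columns already processed using only the i.u.d.\ entries of that column; a union bound over the position of the $(\gamma+1)$-th such event then yields the claim.

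Let $v_1,\dots,v_d$ denote the columns of $T$, set $W_{j+1}=\mathrm{span}(v_{j+1},\dots,v_d)$ with $W_{d+1}=\{0\}$, and let $X_j=\mathbf{1}_{v_j\in W_{j+1}}$. Since $\dim W_j-\dim W_{j+1}=1-X_j$, telescoping the chain $W_{d+1}\subseteq\cdots\subseteq W_1$ gives $\mathrm{rank}(T)=\dim W_1=d-\sum_{j=1}^d X_j$, so the event $\{\mathrm{rank}(T)<d-\gamma\}$ coincides with $\{\sum_{j=1}^d X_j\ge \gamma+1\}$.

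The crux is the conditional estimate
\[
\Pr\bigl\{X_j=1 \;\big|\; v_{j+1},\dots,v_d,\text{ non-i.u.d.\ entries of }v_j\bigr\}\le 2^{-(\Delta_{j+1}+1)},
\]
where $\Delta_{j+1}:=(d-j)-\dim W_{j+1}=\sum_{k>j}X_k$ is the number of \emph{failures} accumulated so far. To prove it, let $S_j$ denote the set of i.u.d.\ positions in column $j$ (so $|S_j|\ge d-j+1$) and let $\pi_j$ be coordinate projection onto $S_j$. The containment $v_j\in W_{j+1}$ forces $\pi_j(v_j)\in\pi_j(W_{j+1})$; conditional on the listed quantities, $\pi_j(v_j)$ is uniform on $\mathbb{F}_2^{|S_j|}$ while $\pi_j(W_{j+1})$ is a fixed set of cardinality at most $2^{\dim W_{j+1}}=2^{(d-j)-\Delta_{j+1}}$, so the conditional probability is bounded above by $2^{(d-j)-\Delta_{j+1}}/2^{d-j+1}=2^{-(\Delta_{j+1}+1)}$.

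Finally, decompose the event $\sum_j X_j\ge\gamma+1$ according to the index $j^\star$ at which the $(\gamma+1)$-th failure occurs in the right-to-left order. At such a $j^\star$ one has $\Delta_{j^\star+1}=\gamma$ together with $X_{j^\star}=1$, so the conditional estimate gives $\Pr\{j^\star=j\}\le 2^{-(\gamma+1)}$; moreover, since $\{j^\star,\dots,d\}$ must contain at least $\gamma+1$ failures, the index $j^\star$ can only take values in $\{1,\dots,d-\gamma\}$. Summing over these $d-\gamma$ possibilities yields the bound. The main delicate step is the conditional estimate above; once one observes that projecting onto $S_j$ simultaneously neutralizes the non-i.u.d.\ entries and keeps $|\pi_j(W_{j+1})|$ bounded by $2^{\dim W_{j+1}}$, the remainder is a clean union bound.
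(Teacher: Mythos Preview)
Your argument is correct and arrives at the same bound, but the route is genuinely different from the paper's. The paper restricts $T$ to its first $d-\gamma$ columns $T'$, observes that $\mathrm{rank}(T)<d-\gamma$ forces $T'v=0$ for some nonzero $v\in\mathbb{F}_2^{d-\gamma}$, and partitions these null vectors by the index $j$ of their first nonzero entry; since the $j$\textsuperscript{th} column contributes at least $d-j+1$ i.u.d.\ entries to $T'v$, a union bound over the $2^{d-\gamma-j}$ vectors with that leading index yields $2^{-(\gamma+1)}$ per value of $j$, and summing over $j=1,\dots,d-\gamma$ finishes. You instead reveal columns from right to left, track the running rank deficit $\Delta_{j+1}$, and bound the conditional probability that each new column is redundant by $2^{-(\Delta_{j+1}+1)}$ via projection onto its i.u.d.\ coordinates, then union-bound over the location of the $(\gamma+1)$\textsuperscript{th} failure. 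The paper's null-vector decomposition is shorter and requires no conditional reasoning; your sequential viewpoint is more structural---it isolates exactly how each column's randomness interacts with the span of the later ones---and would extend more naturally to product-form tail bounds or to settings where one wants to exploit the actual rank of $W_{j+1}$ rather than just its worst case. Both arguments rely on the same implicit assumption that the i.u.d.\ entries are independent not only within a column but also of all other columns, which is how the paper uses the term throughout.
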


\begin{proof} For any integer $0\leq \gamma\leq d-1$, let $T'$ be $T$ restricted to its first $d-\gamma$ columns. Since $T'$ is an $n\times (d-\gamma)$ sub-matrix of $T$, $\Pr\{r(T)<d-\gamma\}\leq \Pr\{r(T')<d-\gamma\}$. Suppose that $r(T')<d-\gamma$. Then there exists a nonzero column vector $v$ of length $d-\gamma$ over $\mathbb{F}_2$ such that the column vector $T'v$ of length $n$ is an all-zero vector. For an integer $1\leq j\leq d-\gamma$, suppose that the first non-zero entry of $v$ is the $j\textsuperscript{th}$. There exist $2^{d-\gamma-j}$ such vectors. Since there exist at least $d-j+1$ i.u.d. entries in the $j\textsuperscript{th}$ column of $T'$, there exist at least $d-j+1$ i.u.d. entries in the vector $T'v$. The probability that all these entries are zero is $2^{-d+j-1}$, and thus the probability that $T'v$ is an all-zero vector given that the first nonzero entry of $v$ is the $j\textsuperscript{th}$ is b.a.b. $2^{-d+j-1}$. Taking a union bound over all such vectors $v$, the probability that $T'v$ is an all-zero vector is $2^{d-\gamma-j}\times 2^{-d+j-1}=2^{-\gamma-1}$. Taking a union bound over all $j$: $1\leq j\leq d-\gamma$, the probability that $T'v$ is an all-zero vector is b.a.b. $(d-\gamma)2^{-(\gamma+1)}$.\end{proof}

The preceding lemma is a special case of what we state in the following. The latter is useful in order to generalize the results on one transmission per opportunity to multiple transmissions per opportunity.

For given integers $w$ and $r$, let $T_{i,j}$ be an $r\times r$ dense matrix over $\mathbb{F}_2$, $\forall i,j: 1\leq j\leq i\leq w$, and $T_{i,j}$ be an all-zero $r\times r$ matrix, $\forall i,j: 1\leq i<j\leq w$. Let $T=[T_{i,j}]_{1\leq i,j\leq w}$, and $n\doteq wr$.

\begin{lemma}\label{lem:SquareT}Let $T$ be defined as above. For every integer $0\leq \gamma\leq n-1$, \[\Pr\{r(T)<n-\gamma\}\leq \left\lceil\frac{n-\gamma}{r}\right\rceil \left(1-2^{-r}\right) 2^{-\gamma}.\]\end{lemma}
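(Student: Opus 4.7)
The plan is to mimic the argument of Lemma~\ref{lem:SingleT} but operate at the level of $r\times r$ blocks rather than single entries. As a first step, I would restrict to the submatrix $T'$ of $T$ formed by its first $m\doteq n-\gamma$ columns; since $\Pr\{r(T)<n-\gamma\}\leq\Pr\{r(T')<m\}$ and the latter event is equivalent to the existence of some nonzero $v\in\mathbb{F}_2^m$ with $T'v=0$, the goal reduces to a union bound over such $v$.

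Next, write $m=(k-1)r+t$ with $1\leq t\leq r$ and $k=\lceil m/r\rceil$, and partition $v=(v_1,\dots,v_k)$ accordingly: $v_1,\dots,v_{k-1}$ of length $r$ and $v_k$ of length $t$. Let $j$ be the smallest index with $v_j\neq 0$. The block lower-triangular structure together with the vanishing of $v_1,\dots,v_{j-1}$ force the row blocks $i<j$ of $T'v$ to be zero automatically, so only the row blocks $i\geq j$ impose constraints. For each such $i$, the $i$-th row block equals $T_{i,j}v_j+X_i$, where $X_i$ is a function of the blocks $T_{i,j'}$ with $j'>j$ only and is hence independent of $T_{i,j}$. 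Since $T_{i,j}$ (restricted to its first $t$ columns when $j=k$) is dense with i.u.d.\ entries and $v_j\neq 0$, the product $T_{i,j}v_j$ is uniform on $\mathbb{F}_2^r$, so the conditional probability that the $i$-th row block vanishes is $2^{-r}$. Because distinct row blocks of $T$ involve disjoint, independent i.u.d.\ entries, the row-block events decouple and
\[
\Pr\{T'v=0\}=2^{-r(w-j+1)}
\]
for every $v$ whose first nonzero block lies at position $j$.

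For the union bound, the number of vectors with first nonzero block at $j<k$ is $(2^r-1)\,2^{m-jr}$, while for $j=k$ it is $2^t-1$. Multiplying by $2^{-r(w-j+1)}$ and using the identities $m-jr-r(w-j+1)=-\gamma-r$ for $j<k$ and $r(w-k+1)=\gamma+t$, each of the $k-1$ terms with $j<k$ contributes $(1-2^{-r})\,2^{-\gamma}$, and the final term contributes $(1-2^{-t})\,2^{-\gamma}\leq(1-2^{-r})\,2^{-\gamma}$. Summing yields $k(1-2^{-r})\,2^{-\gamma}=\lceil(n-\gamma)/r\rceil(1-2^{-r})\,2^{-\gamma}$, as claimed; as a sanity check, the case $r=1$ recovers Lemma~\ref{lem:SingleT} exactly.

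The main subtle point is the independence assertion in the middle step: one must verify that the events $\{(T'v)_i=0\}$ across $i\geq j$ are indeed mutually independent. This rests on the fact that $T_{i,j}$ appears only in the $i$-th row block of $T'v$, that distinct row blocks of $T$ are generated by disjoint i.u.d.\ entries, and that the nuisance terms $X_i$ depend only on $T_{i,j'}$ with $j'>j$; conditioning on these nuisance blocks and integrating out each $T_{i,j}$ in turn then cleanly factorizes the joint probability. Once this independence is in place, the rest is a routine generalization of the entry-by-entry argument used in Lemma~\ref{lem:SingleT}.
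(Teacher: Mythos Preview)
Your proof is correct and follows essentially the same approach as the paper's: restrict to the first $n-\gamma$ columns, union-bound over nonzero kernel vectors classified by the location of their leading nonzero, and use that the corresponding column block contributes $(w-\tau)r$ fresh i.u.d.\ entries to $T'v$. The only cosmetic difference is that you index vectors by their first nonzero \emph{block}, whereas the paper indexes by the first nonzero \emph{entry} and then regroups the resulting sum $\sum_j 2^{\tau r - j}$ into blocks of size $r$; both bookkeepings yield the identical bound $\lceil(n-\gamma)/r\rceil(1-2^{-r})2^{-\gamma}$.
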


\begin{proof} Let $T'$ and $v$ be defined as in the proof of Lemma~\ref{lem:SingleT}. Fix an integer $1\leq j\leq n-\gamma$. Suppose that the first non-zero entry of $v$ is the $j\textsuperscript{th}$. There exist $2^{n-\gamma-j}$ such vectors. Let $\tau$ be the largest integer smaller than $j/r$. The $j\textsuperscript{th}$ column has at least $n-\tau r$ i.u.d. entries, and hence there exist at least $n-\tau r$ i.u.d. entries in the vector $T'v$. These entries are all zero w.p. $2^{\tau r-n}$, and $T'v$ is all-zero given such $v$ w.p. b.a.b. $2^{\tau r-n}$. Taking a union bound over all such vectors, the latter probability is $2^{\tau r-\gamma-j}$. Taking a union bound over $j$, $T'v$ is all-zero w.p. b.a.b. $2^{-\gamma}\sum_{1\leq j\leq n-\gamma} 2^{\tau r-j}$, noting that $\tau$ depends on $j$. We shall upper bound the preceding sum by rewriting it as: $\sum_{0<j\leq r}2^{-j}+\sum_{r<j\leq 2r}2^{r-j}$ $+\cdots+$ $\sum_{(u-1) r<j\leq n-\gamma}2^{(u-1)r-j} = \sum_{0<j\leq r}2^{-j}+\sum_{0<j\leq r}2^{-j}+\cdots+\sum_{0<j\leq n-\gamma-(u-1) r}2^{-j} \leq u \sum_{0<j\leq r}2^{-j} = u \left(1-2^{-r}\right)$, where $u = \left\lceil {(n-\gamma)}/{r}\right\rceil$. This completes the proof.\end{proof}

Let $(0,N_T]$ be the period of time over which the transmissions occur. The decoding matrix at the first internal node ($v_1$) is dense and its density is equal to the number of packets at the node until time $N_T$, i.e., $\mathcal{D}(Q_1)=N_T$. The density of the decoding matrix at the other non-source nodes is bounded from below as follows by applying the preceding lemmas.

\begin{lemma}\label{lem:DensitySinkRelative}For every $1< i\leq L$, the inequality \[\mathcal{D}(Q_{i})\geq \mathcal{D}(Q_{i-1})-\log \mathcal{D}(Q_{i-1})-\log(1/\epsilon)\] fails w.p. b.a.b. $\epsilon$.\end{lemma}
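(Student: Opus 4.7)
The plan is to leverage the factorization $Q_{i}=T'_{i-1}Q'_{i-1}$ established in the setup, together with Lemmas~\ref{lem:DensityTM} and~\ref{lem:SingleT}. Since $Q'_{i-1}$ is a dense matrix with exactly $d := \mathcal{D}(Q_{i-1})$ rows and $T'_{i-1}$ has $d$ columns, Lemma~\ref{lem:DensityTM} yields $\mathcal{D}(Q_{i}) \geq \mathrm{rank}(T'_{i-1})$. It is therefore enough to show that $\mathrm{rank}(T'_{i-1}) \geq d - \log d - \log(1/\epsilon)$ fails with probability at most $\epsilon$.

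To apply Lemma~\ref{lem:SingleT} to $T'_{i-1}$ I need (i) that $T'_{i-1}$ has at least $d$ rows and (ii) that, for every $1\leq j\leq d$, at least $d-j+1$ entries of its $j$-th column are i.u.d. Property (ii) is exactly the column property of $T'$ pointed out right before Lemma~\ref{lem:DensityTM}. Property (i) holds because under deterministic regular lossless traffic node $v_{i-1}$ transmits $r=N_T$ packets on $(0,N_T]$, and $d=\mathcal{D}(Q_{i-1})\leq N_T$ by an easy induction starting from $\mathcal{D}(Q_1)=N_T$. Lemma~\ref{lem:SingleT} then gives, for every integer $0\leq \gamma\leq d-1$,
\[
\Pr\{\mathrm{rank}(T'_{i-1})<d-\gamma\}\;\leq\;(d-\gamma)\,2^{-(\gamma+1)}.
\]

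To finish, I would choose $\gamma := \lceil \log d + \log(1/\epsilon) - 1\rceil$, so that $2^{\gamma+1}\geq d/\epsilon$. Then the displayed bound is at most $(d-\gamma)\,\epsilon/d \leq \epsilon$, while by construction $d-\gamma \geq d - \log d - \log(1/\epsilon)$. Hence the event $\mathcal{D}(Q_i) < d - \log d - \log(1/\epsilon)$ occurs w.p. b.a.b. $\epsilon$. The corner cases where $\gamma$ would fall outside $[0,d-1]$ (e.g., $d\leq 2\epsilon$) make the asserted inequality trivial or vacuous, so they need not be treated separately.

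I do not foresee a real obstacle: the argument is essentially a bookkeeping exercise combining the two lemmas. The only mildly delicate point is choosing $\gamma$ with the right integer rounding so that the leading $d-\gamma$ factor in Lemma~\ref{lem:SingleT} is absorbed without losing the additive $\log d$ term, and verifying that the column-density property of the transfer matrix $T'_{i-1}$ matches the hypothesis of Lemma~\ref{lem:SingleT} (up to a harmless row permutation, since rank is invariant under it).
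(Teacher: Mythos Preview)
Your argument is correct and is exactly the approach the paper sets up: the factorization $Q_i=T'_{i-1}Q'_{i-1}$, the column property $|\mathcal{T}_{i_{\text{col}}}^{\prime(j)}|\geq d-j+1$, Lemma~\ref{lem:DensityTM}, and Lemma~\ref{lem:SingleT} with $\gamma\approx\log d+\log(1/\epsilon)$ are precisely the ingredients the surrounding text prepares. The paper itself does not give an in-text proof of this lemma (it defers the proofs of the lemmas in this section to~\cite{HBJ:2011}), so there is nothing further to compare; your choice of $\gamma=\lceil\log d+\log(1/\epsilon)-1\rceil$ and the observation that $d\leq r=N_T$ under deterministic regular lossless traffic fill in the bookkeeping cleanly.
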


By combining the result of Lemma~\ref{lem:DensitySinkRelative} with $\mathcal{D}(Q_1) = N_T$, we can derive the following result.

\begin{lemma}\label{lem:DensitySink} Suppose that a dense code is applied over a line network of $L$ links with deterministic regular lossless traffics until time $N_T$. Then, the inequality \[\mathcal{D}(Q_{L})\geq N_T-L\log(N_T L/\epsilon)\] fails w.p. b.a.b. $\epsilon$.\end{lemma}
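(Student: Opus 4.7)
The plan is to apply Lemma~\ref{lem:DensitySinkRelative} iteratively at each of the $L-1$ hops beyond the first internal node and combine the resulting per-hop guarantees by a union bound. First I would replace $\epsilon$ in Lemma~\ref{lem:DensitySinkRelative} by $\epsilon/L$, so that after taking a union bound over the hops $i = 2, 3, \ldots, L$, the per-hop inequalities
\[
\mathcal{D}(Q_i) \geq \mathcal{D}(Q_{i-1}) - \log \mathcal{D}(Q_{i-1}) - \log(L/\epsilon)
\]
hold simultaneously except on an event of probability at most $(L-1)\epsilon/L < \epsilon$.

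Next I would condition on this good event and telescope. The natural obstacle is that the recursion has the moving term $\log \mathcal{D}(Q_{i-1})$ inside it, which could in principle accumulate awkwardly; the clean fix is the trivial uniform upper bound $\mathcal{D}(Q_{i-1}) \leq N_T$, valid because the decoding matrix at any node contains at most as many rows as packets have been transmitted on its incoming link, and under a deterministic regular lossless schedule of duration $N_T$ this is at most $N_T$. This replaces the floating $\log \mathcal{D}(Q_{i-1})$ by the constant $\log N_T$ and so caps the density loss per hop at
\[
\log N_T + \log(L/\epsilon) \;=\; \log(N_T L / \epsilon).
\]

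Starting from the initial condition $\mathcal{D}(Q_1) = N_T$ (given in the paragraph preceding Lemma~\ref{lem:DensitySinkRelative}) and iterating this uniform per-hop bound $L-1$ times yields, on the good event,
\[
\mathcal{D}(Q_L) \;\geq\; N_T - (L-1)\log(N_T L/\epsilon) \;\geq\; N_T - L\log(N_T L/\epsilon),
\]
which is precisely the claimed inequality. Since the complementary (bad) event has probability b.a.b.\ $\epsilon$, the statement follows.

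The main obstacle, as noted, is the bookkeeping required to keep the recursion tidy; everything else is a direct consequence of Lemma~\ref{lem:DensitySinkRelative} and a union bound. The slight looseness introduced by replacing $L-1$ by $L$ and by bounding $\log \mathcal{D}(Q_{i-1})$ uniformly by $\log N_T$ is inconsequential for the asymptotic behavior of interest (dense codes approaching the min-cut capacity) but produces a compact closed-form bound.
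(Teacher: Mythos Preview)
Your proposal is correct and is precisely the argument the paper has in mind: the paper itself does not spell out a proof of Lemma~\ref{lem:DensitySink} but merely states that it follows ``by combining the result of Lemma~\ref{lem:DensitySinkRelative} with $\mathcal{D}(Q_1)=N_T$,'' and your iteration with the $\epsilon/L$ substitution, the uniform bound $\log\mathcal{D}(Q_{i-1})\leq\log N_T$, and the union bound over hops is exactly how that combination is carried out.
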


Now, we lower bound the probability that the collection of dense packets at the sink node includes an innovative sub-collection of size $k$. This itself lower bounds the probability that a dense code succeeds.

\begin{lemma}\label{lem:DenseRankProb}Let $M$ be an $n\times k$ ($k\leq n$) dense matrix over $\mathbb{F}_2$. For every $0<\epsilon<1$, \[\Pr\{\text{rank}(M)<k\}\leq \epsilon,\] if $k\leq n-\log(1/\epsilon)$.\end{lemma}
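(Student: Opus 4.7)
The plan is to mimic the column-based union bound used in the proof of Lemma~\ref{lem:SingleT}, but now exploiting the fact that \emph{all} rows of $M$ (not merely certain entries of one column) are i.u.d.\ Bernoulli. Since $M$ has $k$ columns and we want full column rank, I will union-bound over nonzero linear dependencies among the columns.

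First, observe that $\text{rank}(M)<k$ if and only if there exists a nonzero column vector $v\in\mathbb{F}_2^k$ with $Mv=0$. For any fixed nonzero $v$, the $n$ entries of $Mv$ are of the form $\langle r_i,v\rangle$, where $r_1,\ldots,r_n$ are the rows of $M$. Because $M$ is dense, the rows are i.u.d., and a standard fact about $\mathbb{F}_2$ is that the inner product of a uniform Bernoulli vector with any fixed nonzero vector is itself a fair Bernoulli random variable, independently across rows. Hence $\Pr\{Mv=0\}=2^{-n}$ for every fixed nonzero $v$.

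Next, I would take a union bound over the $2^k-1$ nonzero vectors $v\in\mathbb{F}_2^k$, giving
\[
\Pr\{\text{rank}(M)<k\}\;\leq\;(2^k-1)\,2^{-n}\;<\;2^{k-n}.
\]
Finally, the hypothesis $k\leq n-\log(1/\epsilon)$ rearranges to $2^{k-n}\leq\epsilon$, which closes the argument.

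There is really no substantive obstacle here: the only thing to be careful about is justifying the independence/uniformity claim for the entries of $Mv$, which follows immediately from the definition of a dense matrix (i.u.d.\ rows) together with the elementary observation that any nontrivial $\mathbb{F}_2$-linear combination of i.u.d.\ Bernoulli bits is again a fair Bernoulli bit. Everything else is a one-line union bound and an algebraic rearrangement of the threshold.
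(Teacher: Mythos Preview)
Your argument is correct. Note, however, that the paper does not actually supply its own proof of this lemma: the footnote attached to Lemma~\ref{lem:DensityTM} indicates that the proofs of the lemmas in this section (including this one) are deferred to~\cite{HBJ:2011}. That said, your union-bound argument is the standard one and is precisely in the spirit of the paper's proof of Lemma~\ref{lem:SingleT}, which you explicitly modeled; it is almost certainly what the cited reference contains as well.
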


The following result upper bounds the coding delay by putting together the results of Lemmas~\ref{lem:DensitySink} and~\ref{lem:DenseRankProb}.

\begin{theorem}\label{thm:DenseCodeRegularLossless}The $\epsilon$-constrained coding delay of a dense code over a line network of $L$ links with deterministic regular lossless traffics is b.a.b. \[k+L\log(L/\epsilon)+\log(1/\epsilon)+L+1.\]\end{theorem}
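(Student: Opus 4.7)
The plan is to combine Lemma~\ref{lem:DensitySink} and Lemma~\ref{lem:DenseRankProb} through a union bound on two failure events. The $\epsilon$-constrained coding delay is, by definition, the smallest integer $N_T$ for which, with probability at least $1-\epsilon$, the sink's decoding matrix $Q_L$ has rank at least $k$. It is sufficient to work with the dense sub-matrix $Q'_L$, because if $Q'_L$ has at least $k$ rows and its restriction to any $k$ of its columns has rank $k$, then $Q_L$ itself has rank at least $k$ and the dense code succeeds in delivering the $k$ message vectors.

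I would split the allowable failure probability as $\epsilon_1 + \epsilon_2 = \epsilon$, taking $\epsilon_1 = \epsilon_2 = \epsilon/2$ for concreteness. Applying Lemma~\ref{lem:DensitySink} with parameter $\epsilon_1$ gives $\mathcal{D}(Q_L) \ge N_T - L\log(N_T L/\epsilon_1)$ outside an event of probability at most $\epsilon_1$. Conditioning on this, Lemma~\ref{lem:DenseRankProb} applied with parameter $\epsilon_2$ to any $\mathcal{D}(Q_L) \times k$ dense sub-matrix guarantees rank $k$ whenever $\mathcal{D}(Q_L) \ge k + \log(1/\epsilon_2)$, up to an additional event of probability at most $\epsilon_2$. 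A union bound then certifies success with probability $\ge 1-\epsilon$ provided
\[
N_T \ge k + L\log(N_T L/\epsilon_1) + \log(1/\epsilon_2).
\]
Plugging in $\epsilon_1 = \epsilon_2 = \epsilon/2$, the $\log 2$ factors inside the two logarithms contribute precisely the $L + 1$ appearing in the theorem (namely $L\log 2 = L$ from the first term and $\log 2 = 1$ from the second), while the remaining contributions match $L\log(L/\epsilon)$ and $\log(1/\epsilon)$ respectively.

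The main obstacle is handling the implicit $L\log N_T$ that appears on the right-hand side of the above condition, which makes the inequality self-referential in $N_T$. I would resolve this by choosing $N_T$ to be the smallest integer satisfying the implicit inequality and invoking monotonicity in $N_T$ to check that the closed-form value $N_T^\star := k + L\log(L/\epsilon) + \log(1/\epsilon) + L + 1$ is large enough, with the $L\log N_T$ correction absorbed against the $+L+1$ slack in the regimes of interest. Taking the integer ceiling of $N_T^\star$ then yields the claimed upper bound on the $\epsilon$-constrained coding delay.
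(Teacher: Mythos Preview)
Your approach---splitting the failure probability between Lemma~\ref{lem:DensitySink} and Lemma~\ref{lem:DenseRankProb} and taking a union bound---is exactly what the paper does; its ``proof'' is literally the sentence preceding the theorem. You have also correctly traced the appearance of the $L+1$ term to the two $\log 2$ contributions arising from the split $\epsilon_1=\epsilon_2=\epsilon/2$.

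Where your argument breaks down is the last paragraph. The implicit inequality you must verify is
\[
N_T \;\ge\; k + L\log(L/\epsilon) + L\log N_T + \log(1/\epsilon) + L + 1,
\]
and you propose to check it at $N_T^\star = k + L\log(L/\epsilon) + \log(1/\epsilon) + L + 1$. Substituting gives a right-hand side equal to $N_T^\star + L\log N_T^\star$, which strictly exceeds $N_T^\star$; there is no ``$+L+1$ slack'' left over, since that constant has already been spent on the two $\log 2$ terms. The quantity $L\log N_T^\star$ is of order $L\log k$ and cannot be absorbed into any bounded term. So the closed form $N_T^\star$ as stated does \emph{not} satisfy the inequality you derived, and the monotonicity argument you sketch does not close the gap.

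This is a genuine issue with the bound as written in the paper: combining Lemmas~\ref{lem:DensitySink} and~\ref{lem:DenseRankProb} in the natural way yields an upper bound of the form $k + L\log(kL/\epsilon) + \log(1/\epsilon) + O(L)$ (or, after one bootstrap, $k + L\log(L/\epsilon) + L\log k + \log(1/\epsilon) + L + 1$), not the $N_T$-free expression in the theorem. The paper's terse proof does not explain how the $L\log N_T$ term is eliminated, and your attempt to hide it in $L+1$ is not valid. You should either retain an additive $L\log k$ term or justify carefully why it can be dropped in the regime under consideration.
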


\section{Deterministic Regular Traffic with Bernoulli Losses}\label{sec:BernoulliLossRegularTraffic}
\subsection{Identical Links}
In this case, the Bernoulli parameters $\{p_i\}_{1\leq i\leq L}$ are all the same, and equal to $p$. Similar to the analysis of the previous case, in the case of the deterministic regular traffic with Bernoulli losses, we need to track the number of dense packets through the network.

The density of the decoding matrix at the receiving node of a link depends on the density of the decoding matrix and the rank of the transfer matrix at the transmitting node of the link. The rank of a matrix is a function of its structure, and the structure of the transfer matrix at a node depends on the number of dense packet arrivals at the node and the number of packet departures from the node before or after any given time. Such parameters depend on the transmission schedule and the loss model of the link, and are therefore random variables. It is however not straightforward to find the distribution of such random variables. We rather adopt a probabilistic technique to lower bound the rank of the transfer matrices as follows.

We split the time interval $(0,N_T]$ into a number of disjoint subintervals (partitions) of the same length. The arrivals in the first $j$ partitions occur before the departures in the $(j+1)\textsuperscript{th}$ partition. Thus the number of arrivals before a given point in time within the $(j+1)\textsuperscript{th}$ partition is bounded from below by the sum of the number of arrivals in the first $j$ partitions. Such a method of counting is however suboptimal since there might be some extra arrivals in the $(j+1)\textsuperscript{th}$ partition before some points in time within the same partition. To control the impact of suboptimality, the length of the partitions thus needs to be chosen with some care.\footnote{\label{ftnt:PartitionLength} On one hand, the length of the partitions needs to be sufficiently small such that there is not a large number of arrivals in one partition with respect to the total number of arrivals in all the partitions. This should be the case because ignoring a subset of arrivals in one partition should not cause a significant difference in the number of arrivals before each point in time within the same partition. On the other hand, the partitions need to be long enough such that the deviation of the number of arrivals from the expectation in one partition is negligible in comparison with the expectation itself.}



Let $w$ be the number of partitions of the interval $(0,N_T]$. Let $I_{ij}$ be the $j\textsuperscript{th}$ partition pertaining to the $i\textsuperscript{th}$ link for all $i$ and $j$. We start off with lower bounding the number of packets in $I_{ij}$. Let $\varphi_{ij}$ be the number of packets in $I_{ij}$. The length of the partition $I_{ij}$ is $N_{T}/w$. Thus, $\varphi_{ij}$ is a binomial random variable with the expected value $\varphi \doteq pN_T /w$.




Hereafter, for the ease of exposition, let us denote $x/2$ by $\dot{x}$, for every $x\in\mathbb{R}$. By applying the Chernoff bound, one can show that the inequality \[\varphi_{ij}\geq r\doteq \left(1-\gamma^{*}\right)\varphi\] fails w.p. b.a.b. $\dot{\epsilon}$, so long as $\gamma^{*}$ is chosen such that $r$ is an integer, and $\gamma^{*}$ goes to $0$ as $N_T$ goes to infinity, where \begin{equation}\label{eq:gammastar}\gamma^{*}\sim \left(\frac{2}{\varphi}\ln\frac{2}{\epsilon}\right)^{\frac{1}{2}}.\end{equation}





We focus on the set of all packets over the $i\textsuperscript{th}$ link in the \emph{active} partitions: $I_{ij}$ is `active' if $i\leq j\leq w-L+i$. Such a partition is active in the sense that (i) there exists some other partition over the upper link so that all its packets arrive before the departure of all the packets in the underlying active partition, and (ii) there exists some other partition over the lower link so that all its packets depart after the arrival of all the packets in the underlying active partition.

Let $w_T$ denote the total number of active partitions. It is easy to see that $w_T = L(w-L+1)$. We select $r$ packets in each active partition and ignore the rest. This method of selection fails if the number of packets in some active partition is less than $r$. Clearly, the failure occurs w.p. b.a.b. $w_T \dot{\epsilon}$.

We shall lower bound the number of dense packets in active partitions. Before explaining the lower bounding technique, let us first state two lemmas which will be useful to lower bound the rank of the transfer matrix at each node (depending on whether the number of dense packet arrivals at the node in a partition is larger or smaller than the number of packet departures from the node in the same partition).



For given integers $w$, $r$ and $\{r_j\}_{1\leq j\leq w}$ ($0\leq r_j\leq r$), let $T_{i,j}$ be defined as follows: $T_{i,j}$ is an $r\times r_j$ dense matrix over $\mathbb{F}_2$, if $1\leq j\leq i\leq w$; or an arbitrary $r\times r_j$ matrix over $\mathbb{F}_2$, otherwise (i.e., if $1\leq i<j\leq w$). Let $T=[T_{i,j}]_{1\leq i,j\leq w}$, and $n\doteq \sum_{1\leq j\leq w}r_j$.

\begin{lemma}\label{lem:VerticalT}Let $T$ be defined as above. For every integer $0\leq \gamma\leq n-1$, \[\Pr\{r(T)<n-\gamma\}\leq u \left(1-2^{-r_{\text{max}}}\right) 2^{-\gamma+n-wr+(r-r_{\text{min}})(u-1)},\] where $r_{\text{max}}=\max_j r_j$, $r_{\text{min}}=\min_j r_j$, and $u=\left\lceil{(n-\gamma)}/{r_{\text{min}}}\right\rceil$.\end{lemma}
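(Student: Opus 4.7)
The plan is to adapt the argument of Lemma~\ref{lem:SquareT} to accommodate variable block-widths $r_j$ and nonzero (but arbitrary) upper-triangular blocks. Let $T'$ denote $T$ restricted to its first $n-\gamma$ columns, so $\Pr\{r(T)<n-\gamma\}\leq\Pr\{r(T')<n-\gamma\}$. If $r(T')<n-\gamma$, there is a nonzero $v\in\mathbb{F}_2^{n-\gamma}$ with $T'v=0$. I would condition on the position $k$ of the first nonzero entry of $v$, noting that there are $2^{n-\gamma-k}$ such vectors, and let $\tau$ be the unique integer with $\sum_{i\leq\tau}r_i<k\leq\sum_{i\leq\tau+1}r_i$; equivalently, $k$ lies in the $(\tau+1)\textsuperscript{th}$ block column.

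The key step is to show that $\Pr\{T'v=0\}\leq 2^{-(w-\tau)r}$ for each such $v$. By construction, the $k\textsuperscript{th}$ column of $T'$ has i.u.d.\ entries in all of the row-blocks $\tau+1,\tau+2,\ldots,w$, a total of $(w-\tau)r$ i.u.d.\ entries. For each row index $m$ inside one of these row-blocks, the corresponding coordinate of $T'v$ decomposes as the sum of $T'_{m,k}$, further i.u.d.\ entries coming from coordinates $v_{k'}=1$ with $k'>k$ belonging to block-columns at most the row-block index, and a fixed (possibly adversarial) contribution coming from the upper-triangular blocks. Since entries of distinct dense blocks are jointly independent, these $(w-\tau)r$ coordinates of $T'v$ are independent uniform Bernoulli, and the event that they all vanish has probability exactly $2^{-(w-\tau)r}$. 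A union bound over the $2^{n-\gamma-k}$ admissible vectors then yields the bound $2^{n-\gamma-k-(w-\tau)r}=2^{n-\gamma-wr+\tau r-k}$ for each $k$.

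To finish, I would sum over $k$ by grouping it by its block-column index. Within block column $\tau+1$, $k$ takes at most $r_{\tau+1}\leq r_{\max}$ consecutive values starting at $\sum_{i\leq\tau}r_i+1$, so the inner geometric sum over $k$ evaluates to at most $(1-2^{-r_{\max}})\,2^{\tau r-\sum_{i\leq\tau}r_i}\leq (1-2^{-r_{\max}})\,2^{\tau(r-r_{\min})}$ after using $r_i\geq r_{\min}$. Since the first $n-\gamma$ columns intersect at most $u=\lceil(n-\gamma)/r_{\min}\rceil$ block-columns, summing $2^{\tau(r-r_{\min})}$ over $\tau=0,1,\ldots,u-1$ and crudely dominating this geometric series by $u\cdot 2^{(u-1)(r-r_{\min})}$ produces the claimed inequality.

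I expect the main obstacle to be the independence/uniformity verification in the middle paragraph: unlike Lemma~\ref{lem:SquareT}, where the upper-triangular blocks vanish, here one must justify that adding fixed $\mathbb{F}_2$-valued quantities row-by-row to the dense random contributions leaves the marginal distribution of the relevant $(w-\tau)r$ coordinates of $T'v$ jointly uniform. Once this is in hand, the remaining steps are bookkeeping; the non-uniform block widths $r_j$ simply introduce the factor $(1-2^{-r_{\max}})$ and the extra multiplier $2^{(u-1)(r-r_{\min})}$ relative to the homogeneous case.
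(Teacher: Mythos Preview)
Your proposal is correct and follows essentially the same route as the paper's proof: restrict to the first $n-\gamma$ columns, condition on the position of the first nonzero entry of $v$, use the $(w-\tau)r$ i.u.d.\ entries in the relevant row-blocks to bound $\Pr\{T'v=0\}$, and then group the union-bound sum by block-column to extract the factors $(1-2^{-r_{\max}})$ and $2^{(r-r_{\min})(u-1)}$. The only notable difference is that you make explicit the independence/uniformity verification for the coordinates of $T'v$ in the presence of arbitrary upper-triangular blocks, whereas the paper simply asserts that ``the $j\textsuperscript{th}$ column of $T'$ has at least $(w-\tau)r$ i.u.d.\ entries, and hence the vector $T'v$ has at least $(w-\tau)r$ i.u.d.\ entries''; your more careful treatment of this step is a welcome clarification but not a departure in method.
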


\begin{proof} Let $T'$ and $v$ be defined as in the proof of Lemma~\ref{lem:SingleT}. Let us define $r_0\doteq 0$ for convenience. For a given integer $1\leq j\leq n-\gamma$, define $\tau: \sum_{0\leq i\leq \tau}r_i<j\leq \sum_{0\leq i\leq \tau+1}r_i$. Further, define $\tau_{\text{max}}: \sum_{0\leq i\leq \tau_{\text{max}}}r_i<n-\gamma\leq\sum_{0\leq i\leq \tau_{\text{max}}+1}r_i$ ($0\leq \tau_{\text{max}}<w$). By definition, $\tau_{\text{max}}\leq\min\{w,u-1\}$. For every $0\leq \tau\leq \tau_{\text{max}}$, define $s_{\tau}=\sum_{0\leq i\leq \tau}r_i$. The $j\textsuperscript{th}$ column of $T'$ has at least $(w-\tau)r$ i.u.d. entries, and hence the vector $T'v$ has at least $(w-\tau)r$ i.u.d. entries. Thus, $T'v$ is all-zero w.p. b.a.b. $2^{-\gamma+n-wr}\sum_{1\leq j\leq n-\gamma}2^{\tau r-j}$, noting that $\tau$ depends on $j$. We rewrite the sum as:
\begin{eqnarray*}
&& \hspace*{-1 cm} \sum_{0<j\leq s_1} 2^{-j}+\sum_{s_1<j\leq s_2}2^{r-j}+ \\
& & \cdots+\sum_{s_{\tau_{\text{max}}}<j\leq n-\gamma}2^{\tau_{\text{max}}r-j} = \\
& & \sum_{0<j\leq r_1}2^{-j}+2^{r-s_1}\sum_{0<j\leq r_2}2^{-j}+ \\
& & \cdots+2^{\tau_{\text{max}}r-s_{\tau_{\text{max}}}}\sum_{0<j\leq n-\gamma-s_{\tau_{\text{max}}}}2^{-j}\leq \\
& & \sum_{0<j\leq r_{\text{max}}}2^{-j} + 2^{r-s_1}\sum_{0<j\leq r_{\text{max}}}2^{-j}+ \\
& & \cdots+2^{\tau_{\text{max}}r-s_{\tau_{\text{max}}}}\sum_{0<j\leq r_{\text{max}}}2^{-j}= \\
& & \sum_{0<j\leq r_{\text{max}}}2^{-j}\sum_{0\leq \tau'\leq \tau_{\text{max}}}2^{\tau' r-s_{\tau'}}\leq \\
& & \sum_{0<j\leq r_{\text{max}}}2^{-j}\sum_{0\leq \tau'\leq \tau_{\text{max}}}2^{(r-r_{\text{min}})\tau'}= \\
& & (1-2^{-r_{\text{max}}})\sum_{0\leq \tau'\leq \tau_{\text{max}}}2^{(r-r_{\text{min}})\tau'}.
\end{eqnarray*} The series $\sum_{0\leq \tau'\leq \tau_{\text{max}}}2^{(r-r_{\text{min}})\tau'}$ converges from below to $(\tau_{\text{max}}+1)2^{(r-r_{\text{min}})\tau_{\text{max}}}$ if $r-r_{\text{min}}$ goes to infinity. Thus the following is always true: $(1-2^{-r_{\text{max}}})$ $\sum_{0\leq \tau'\leq \tau_{\text{max}}}2^{(r-r_{\text{min}})\tau'}$ $\leq$ $(\tau_{\text{max}}+1)(1-2^{-r_{\text{max}}})2^{(r-r_{\text{min}})\tau_{\text{max}}}$ $\leq u(1-2^{-r_{\text{max}}})2^{(r-r_{\text{min}})(u-1)}$. This proves the lemma.\end{proof}

For given integers $w$, $r$ and $\{r_j\}_{1\leq j\leq w}$ ($r\leq r_j$), let $T_{i,j}$ be defined as follows: $T_{i,j}$ is an $r\times r_j$ dense matrix over $\mathbb{F}_2$, if $1\leq j\leq i\leq w$; or an arbitrary $r\times r_j$ matrix over $\mathbb{F}_2$, otherwise (i.e., if $1\leq i<j\leq w$). Let $T=[T_{i,j}]_{1\leq i,j\leq w}$, and $n\doteq w r$.

\begin{lemma}\label{lem:HorizontalT} Let $T$ be defined as above. For every integer $0\leq \gamma\leq n-1$, \[\Pr\{r(T)<n-\gamma\}\leq u \left(1-2^{-r}\right) 2^{-\gamma+n-wr_{\text{min}}+(r_{\text{min}}-r)(u-1)},\] where $u=\left\lceil{(n-\gamma)}/{r}\right\rceil$.\end{lemma}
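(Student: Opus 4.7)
My plan is to adapt the proof of Lemma~\ref{lem:VerticalT} to the Horizontal regime $r\leq r_j$. Let $T'$ denote $T$ restricted to its first $n-\gamma$ columns in block-column order, and let $v\in\mathbb{F}_2^{n-\gamma}$ be a nonzero vector with $T'v=0$; such a $v$ exists whenever $r(T)<n-\gamma$ because $r(T')\leq r(T)$. For each $j\in\{1,\ldots,n-\gamma\}$ there are $2^{n-\gamma-j}$ vectors $v$ whose first nonzero coordinate is $j$; letting $\tau$ be the unique integer with $s_{\tau}<j\leq s_{\tau+1}$, column $j$ sits in block column $\tau+1$, and the dense blocks $T_{i,\tau+1}$ for $i\geq\tau+1$ supply at least $(w-\tau)r$ i.u.d.\ entries, so $\Pr\{T'v=0\}\leq 2^{-(w-\tau)r}$. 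A union bound together with $n=wr$ gives
\[
\Pr\{r(T)<n-\gamma\}\leq 2^{-\gamma}\sum_{j=1}^{n-\gamma}2^{\tau r-j}.
\]

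I would then decompose the sum by block column exactly as in the proof of Lemma~\ref{lem:VerticalT}, obtaining $\sum_{\tau'=0}^{\tau_{\text{max}}}2^{\tau' r-s_{\tau'}}(1-2^{-c_{\tau'+1}})$, with $c_{\tau'+1}=r_{\tau'+1}$ except at the boundary block where $c_{\tau_{\text{max}}+1}=n-\gamma-s_{\tau_{\text{max}}}$. Two Horizontal-specific estimates now enter. First, because $r\leq r_{\tau'+1}$, one splits off the leading $r$ terms of each inner sum $\sum_{0<j'\leq c_{\tau'+1}}2^{-j'}$ to produce the factor $1-2^{-r}$ (the analogue of the $1-2^{-r_{\text{max}}}$ factor appearing in the Vertical case). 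Second, the inequality $s_{\tau'}\geq\tau' r_{\text{min}}$ bounds $2^{\tau' r-s_{\tau'}}$ by $2^{(r-r_{\text{min}})\tau'}$, which is now a non-increasing geometric sequence in $\tau'$ because $r\leq r_{\text{min}}$---opposite to the Vertical case, where the analogous sequence is non-decreasing.

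Bounding $\sum_{\tau'=0}^{\tau_{\text{max}}}2^{-(r_{\text{min}}-r)\tau'}$ by $(\tau_{\text{max}}+1)$ times its largest term (which is at $\tau'=0$) and using $\tau_{\text{max}}\leq u-1$ yields the $u$ in the stated bound. The extra factor $2^{-(r_{\text{min}}-r)(w-u+1)}$ then comes from accounting for the $w-u+1$ trailing block columns of $T$ (those the first $n-\gamma$ columns either do not reach or reach only in part): each of these contributes a sharpening factor of $2^{-(r_{\text{min}}-r)}$ relative to the Square bound of Lemma~\ref{lem:SquareT}. Using $n=wr$ the resulting exponent collapses to $-\gamma+n-wr_{\text{min}}+(r_{\text{min}}-r)(u-1)$, which matches the stated form.

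The hard part will be this last algebraic step, especially the careful accounting of the trailing-block improvement and the boundary block $\tau_{\text{max}}+1$ (whose inner sum may involve fewer than $r_{\tau_{\text{max}}+1}$ terms). In contrast to the Vertical case, where the increasing geometric series is naturally bounded by its final term and the exponent $2^{(r-r_{\text{min}})(u-1)}$ drops out directly, here the decreasing series requires one to trade off between the count-times-max estimate and the geometric tail contributed by the unused columns, and reconciling this trade-off with the $(1-2^{-r})$ truncation factor is the principal technical obstacle; the remainder is a routine adaptation of the Vertical argument.
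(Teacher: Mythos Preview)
Your approach has a genuine gap at exactly the place you flag as ``the hard part.'' By restricting $T$ to its first $n-\gamma$ \emph{columns} and bounding $\Pr\{r(T)<n-\gamma\}\le\Pr\{r(T')<n-\gamma\}$, you irrevocably discard the extra $r_j-r\ge r_{\min}-r$ columns in each block; those columns are precisely what distinguishes the Horizontal case from the Square case of Lemma~\ref{lem:SquareT}, and no ``trailing-block improvement'' can be recovered from $T'$ alone. Concretely, take $w=2$, $r=1$, $r_1=r_2=r_{\min}=2$, $\gamma=0$, so $n=2$ and $u=2$. The stated bound is $u(1-2^{-r})2^{-(r_{\min}-r)(w-u+1)}=2\cdot\tfrac12\cdot 2^{-1}=\tfrac12$. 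Your $T'$ is the $2\times2$ submatrix consisting of the first two columns, which is fully i.u.d.; already the exact value $\Pr\{r(T')<2\}=10/16=5/8$ exceeds $1/2$, and your union-bound expression $2^{-\gamma}\sum_j 2^{\tau r-j}=2^{-1}+2^{-2}=3/4$ is larger still. So the chain of inequalities you propose cannot terminate at the claimed bound. The step ``splits off the leading $r$ terms \ldots\ to produce the factor $1-2^{-r}$'' is also not an upper bound: for $c\ge r$ one has $1-2^{-c}\ge 1-2^{-r}$, not $\le$.

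The paper's proof avoids this by dualising: it restricts $T$ to its last $n-\gamma$ \emph{rows} (indexed from the bottom so that row~$1$ is the most dense) and seeks a nonzero row vector $v$ with $vT'=0$. Because every retained row still sees all $\sum_j r_j$ columns, the $j$\textsuperscript{th} row contributes at least $s_\tau=\sum_{l\le w-\tau}r_l\ge(w-\tau)r_{\min}$ i.u.d.\ entries, and it is this $r_{\min}$ (rather than $r$) in the exponent that produces the sharpening factor $2^{n-wr_{\min}+(r_{\min}-r)(u-1)}$. In short, in the Vertical lemma the matrix is tall and the column/right-kernel argument is natural; in the Horizontal lemma the matrix is wide and the correct analogue is the row/left-kernel argument, not a second column restriction.
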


\begin{proof} We start the proof by noting that $T$ has a smaller number of rows than columns, and the minimum number of rows and columns gives an upper bound on the rank of the matrix. Let $T'$ be $T$ restricted to its last $n-\gamma$ rows. For every $0\leq\tau\leq w$, define $s_{\tau}=\sum_{0\leq j\leq w-\tau}r_j$. Thus, $T'$ is of size ${(n-\gamma)\times s_0}$. Suppose that there exists a nonzero row vector $v$ of length $n-\gamma$ whose entries are over $\mathbb{F}_2$, and its first nonzero entry is the $j\textsuperscript{th}$, and the row vector $vT'$ is all-zero. There are $2^{n-\gamma-j}$ such vectors. Let $\tau$ be the largest integer smaller than $j/r$. The $j\textsuperscript{th}$ row of $T'$ has at least $s_{\tau}$ i.u.d. entries, and hence the vector $vT'$ has at least $s_{\tau}$ i.u.d. entries. Thus, $vT'$ is all-zero w.p. b.a.b. $2^{-\gamma+n}\sum_{1\leq j\leq n-\gamma}2^{-j-s_{\tau}}$. By definition, $s_{\tau}\geq (w-\tau)r_{\text{min}}$, and the preceding sum can thus be upper bounded as follows: $\sum_{1\leq j\leq n-\gamma}2^{-j-s_{\tau}}\leq$ $\sum_{1\leq j\leq n-\gamma}2^{-j-(w-\tau)r_{\text{min}}}$. The latter sum can be rewritten itself as:
\begin{eqnarray*}&& \hspace*{-1 cm} \sum_{0<j\leq r}2^{-j-wr_{\text{min}}}+\sum_{r<j\leq 2r}2^{-j-(w-1)r_{\text{min}}}+\\
& & \cdots +\sum_{(u-1)r<j\leq n-\gamma}2^{-j-(w-u+1)r_{\text{min}}} =  \\
& & 2^{-wr_{\text{min}}}\sum_{0<j\leq r}2^{-j}+2^{-(w-1)r_{\text{min}}-r}\sum_{0<j\leq r}2^{-j}+ \\
& & \cdots+2^{-(w-1)r_{\text{min}}-(u-1)r}\sum_{0<j\leq n-\gamma-(u-1)r}2^{-j} \leq \\
& & 2^{-wr_{\text{min}}}\sum_{0<j\leq m}2^{-j}\sum_{0\leq \tau'\leq u-1}2^{(r_{\text{min}}-r)\tau'}= \\
& & (1-2^{-r})2^{-wr_{\text{min}}}\sum_{0\leq \tau'\leq u-1}2^{(r_{\text{min}}-r)\tau'}.\end{eqnarray*} The last sum is bounded from above by $u \cdot 2^{(r_{\text{min}}-r)(u-1)}$, and this completes the proof. \end{proof}

For every $1<i\leq L$, and $1\leq j\leq w-L+1$, the number of dense packets in the first $j$ active partitions over the $i\textsuperscript{th}$ link can be lower bounded as follows: For every $1\leq l\leq j$, suppose that the number of dense packets in the first $l$ active partitions over the $(i-1)\textsuperscript{th}$ link is already lower bounded. Let $T$ be the transfer matrix at the $i\textsuperscript{th}$ node, restricted to the successful packet transmissions within the first $j$ active partitions over the $i\textsuperscript{th}$ link (the number of such packets in each partition is already lower bounded). Then, it can be shown that $T$ includes a sub-matrix $T'$ with a structure similar to that in Lemma~\ref{lem:VerticalT} or the one in Lemma~\ref{lem:HorizontalT}.\footnote{In the case of identical links, the transfer matrix at each node includes a sub-matrix similar to that in Lemma~\ref{lem:VerticalT}. However, in the case of non-identical links, depending on the traffic parameters, the transfer matrix at a node might include a sub-matrix similar to that in Lemma~\ref{lem:VerticalT} or the one in Lemma~\ref{lem:HorizontalT}.} By applying the proper lemma, the rank of the transfer matrix at the $i\textsuperscript{th}$ node, and consequently, by applying Lemma~\ref{lem:DensityTM}, the number of dense packets in the first $j$ active partitions over the $i\textsuperscript{th}$ link can be lower bounded.

Note that, because of its recursive nature, the above algorithm lower bounds the number of dense packets in the first $j$ active partitions over the $i\textsuperscript{th}$ link as a function of the number of dense packets in the active partitions pertaining to the first link. Further, the packets over the first link are all dense (by the definition of the dense packets), and hence by using the recursion, the following results can be derived.

Let $\mathcal{D}(Q_i^j)$ be the number of dense packets in the first $j$ active partitions over the $i\textsuperscript{th}$ link. Let $\mathcal{D}_{p}(Q_i^j)$ lower bound $\mathcal{D}(Q_i^j)$ such that $\mathcal{D}(Q_i^j)<\mathcal{D}_{p}(Q_i^j)$ w.p. b.a.b. $\dot{\epsilon}$, given $\mathcal{D}(Q_{s}^{\tau})\geq\mathcal{D}_{p}(Q_{s}^{\tau})$, for every $1\leq s\leq i$ and $1\leq \tau\leq j$, except $(s,\tau)=(i,j)$.\footnote{$\mathcal{D}_{p}(Q_i^j)$ is a ``proper'' lower bound on $\mathcal{D}(Q_i^j)$ for the purpose of the analysis in this paper and hence the subscript ``$p$.''} We define $r_{ij}$ in a recursive fashion as the largest non-negative integer satisfying $r_{ij}\leq \mathcal{D}_{p}(Q_i^j)-\sum_{1\leq \tau<j}r_{i\tau}$.

We construct a collection of dense packets at the $i\textsuperscript{th}$ node as follows: starting with an empty collection (at the step zero), for every $1\leq j\leq w-L+1$, at the $j\textsuperscript{th}$ step, we expose the packets in the active partitions over the $i\textsuperscript{th}$ link in order, one by one. We add a packet to the collection whenever the packet is dense (with respect to the current collection), until revealing $r_{ij}$ new dense packets. The size of such a collection lower bounds the number of dense codes at the $i\textsuperscript{th}$ node, and in order to study the structure of the transfer matrix at this node, we consider the packets in the subsets of the underlying collection, each subset pertaining to one of the collection steps, and ignore the rest of packets.

Clearly, $\mathcal{D}(Q_1^j)\geq rj$, $\forall j: 1\leq j\leq w-L+1$ (since $r$ packets are selected in each partition). For any other values of $i$ and $j$, $\mathcal{D}(Q_i^j)$ is lower bounded as follows.

\begin{lemma}\label{lem:Omegaii}For every $1< i\leq L$, \[\mathcal{D}(Q_i^1)\geq r-\log(1/\epsilon)-\log i - 1\] fails w.p. b.a.b. $\dot{\epsilon}$.\end{lemma}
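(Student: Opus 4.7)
The plan is to lower-bound $\mathcal{D}(Q_i^1)$ by the rank of a product of $i-1$ independent $r\times r$ matrices with i.u.d.\ Bernoulli entries, obtained by tracking a ``main cascade'' through the first active partitions $I_{l,l}$ at each link $l=1,\ldots,i$. For each $l=2,\ldots,i$, I would define $A_l$ as the $r\times r$ submatrix of the local-encoding matrix at $v_{l-1}$ whose rows are indexed by the $r$ selected packets of $I_{l,l}$ and whose columns are indexed by the $r$ selected packets of $I_{l-1,l-1}$. Since all arrivals at $v_{l-1}$ during time interval $l-1$ precede all departures from $v_{l-1}$ during time interval $l$, every entry of $A_l$ is an i.u.d.\ Bernoulli local-encoding coefficient, and the $A_l$'s are mutually independent because they come from distinct nodes' randomness. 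Setting $M\doteq A_i A_{i-1}\cdots A_2$, the $r$ packets selected at $v_i$ in $I_{i,i}$ can be expressed as $M$ applied to the $r$ (automatically dense) packets selected at $v_1$ in $I_{1,1}$, plus corrections arising from packets received at intermediate nodes from outside the main cascade.

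To control $\text{rank}(M)$, I would use a Markov-chain argument over $\mathbb{F}_2^r$: for any fixed nonzero $v\in\mathbb{F}_2^r$, set $Y_1=v$ and $Y_l=A_lY_{l-1}$. Because $A_l$ is i.u.d.\ Bernoulli and independent of $Y_{l-1}$, conditional on $Y_{l-1}\neq 0$ the vector $Y_l$ is uniform on $\mathbb{F}_2^r$, so $\Pr(Y_l=0\mid Y_{l-1}\neq 0)=2^{-r}$, and a union bound over $l=2,\ldots,i$ gives $\Pr(Mv=0)=\Pr(Y_i=0)\leq (i-1)2^{-r}$. Summing over $v\in\mathbb{F}_2^r\setminus\{0\}$, the expected size of $\ker(M)\setminus\{0\}$ is at most $(2^r-1)(i-1)2^{-r}<i-1$. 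Since $\text{rank}(M)<r-\gamma$ forces $|\ker(M)\setminus\{0\}|\geq 2^\gamma$, Markov's inequality yields $\Pr(\text{rank}(M)<r-\gamma)\leq (i-1)2^{-\gamma}$. Choosing $\gamma=1+\log((i-1)/\epsilon)\leq 1+\log i+\log(1/\epsilon)$ then pushes the failure probability down to $\dot\epsilon$, giving $\text{rank}(M)\geq r-\log(1/\epsilon)-\log i-1$ with probability at least $1-\dot\epsilon$.

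The main obstacle will be the passage from $\text{rank}(M)$ back to the density $\mathcal{D}(Q_i^1)$, because the rows of $Q_i^1$ carry not only the main-cascade contribution $MQ_1'$ but also correction terms absorbed along non-main paths through intermediate nodes. The key observation is that the local-encoding randomness producing those correction terms is independent of the coefficients making up $A_2,\ldots,A_i$ and of the global encoding vectors of the $r$ dense packets at $v_1$. An argument in the spirit of Lemma~\ref{lem:DensityTM} then shows that any subset of rows of $Q_i^1$ whose corresponding rows of $M$ are linearly independent still forms a dense collection, yielding $\mathcal{D}(Q_i^1)\geq\text{rank}(M)$. Combining this with the Markov-chain rank bound proves the claim.
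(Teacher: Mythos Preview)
Your Markov-chain bound on $\text{rank}(M)$, with $M=A_i\cdots A_2$ a product of independent $r\times r$ i.u.d.\ Bernoulli matrices, is correct and pleasantly direct; this part differs from the paper, which instead works one hop at a time, applying Lemma~\ref{lem:VerticalT} (with $w=1$) to the transfer matrix at each node and then taking a union bound over the $i$ links to produce the $\log i$ term.

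The real problem is the last step, where you pass from $\text{rank}(M)$ to $\mathcal{D}(Q_i^1)$. Your independence claim is about the local-encoding \emph{coefficients} that feed the correction terms, and that much is true. But the correction \emph{terms} themselves are not independent of $P_1$: every ``non-main'' packet at an intermediate node $v_{l-1}$ is still a linear combination of \emph{all} packets that $v_{l-1}$ has received, including the $r$ selected packets whose global encoding vectors are the rows of $P_1$ (or of $P_{l-1}$, which in turn carries $P_1$). Concretely, already for $i=3$ one has $P_3=(A_3A_2+B_3A_2')P_1+(\cdots)S_1$, where $A_2'$ encodes the action of $v_1$ on $P_1$ along the non-main packets at $v_2$. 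Thus the actual coefficient of $P_1$ in $P_i$ is some $M''\neq M$, and the argument ``in the spirit of Lemma~\ref{lem:DensityTM}'' only yields $\mathcal{D}(Q_i^1)\geq\text{rank}(M'')$. Nothing you have proved controls $\text{rank}(M'')$ in terms of $\text{rank}(M)$; the extra summand $B_3A_2'$ (and its higher analogues) can in principle lower the rank.

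The paper sidesteps exactly this difficulty by never forming an end-to-end product. At each node it uses the transfer matrix $T'$ whose columns are indexed by the \emph{dense} packets at the previous node (not by the $r$ selected raw packets); by construction this $T'$ satisfies $Q_{i+1}=T'Q'_i$ \emph{exactly}, with no leftover corrections, so Lemma~\ref{lem:DensityTM} applies cleanly. One then bounds $\text{rank}(T')$ via Lemma~\ref{lem:VerticalT}, obtains $\mathcal{D}_p(Q_s^1)=r-\log(1/\epsilon)-1$ at every intermediate $s$, and the union bound over $s\leq i$ supplies the extra $\log i$. If you want to keep your product-of-matrices picture, you would need to run the Markov-chain argument on $M''$ rather than on $M$, and the recursion for $M''$ no longer has the simple ``multiply by a fresh i.u.d.\ matrix'' form.
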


\begin{proof} Fix $1<i\leq L$. Let $T$ be the transfer matrix at the starting node of the $i\textsuperscript{th}$ link. Let $T'$ be $T$ restricted to the packets in the first active partition over the $i\textsuperscript{th}$ link. For every $1<s<i$, suppose $\mathcal{D}(Q_s^1)\geq \mathcal{D}_p(Q_s^1)$, where $\mathcal{D}_p(Q_s^1)=r-\log(1/\epsilon)-1$, and $\mathcal{D}(Q_1^1)=\mathcal{D}_p(Q_1^1)=r$. Then, $T'$ includes an $r\times r_1$ dense sub-matrix.\footnote{We often drop the subscript $i$ in the notation $r_{ij}$ unless there is a danger of confusion.} Thus by applying Lemma~\ref{lem:VerticalT}, for every $0\leq \gamma\leq r_1-1$, $\Pr\{r(T')<r_1-\gamma\}\leq u (1-2^{-r_1}) 2^{-\gamma+r_1-r+(r-r_1)(u-1)}=(1-2^{-r_1})2^{-\gamma+r_1-r}$, since $u=\lceil(r_1-\gamma)/r_1\rceil=1$. Taking $\gamma=\log(1/\epsilon)+r_1-r+1$, it follows that $\Pr\{r(T')<r_1-\gamma\}\leq \dot{\epsilon}$. By applying Lemma~\ref{lem:DensityTM}, $\mathcal{D}(Q_i^1)<r-\log(1/\epsilon)-1$ w.p. b.a.b. $\dot{\epsilon}$. Thus, $\mathcal{D}_p(Q_i^1)=r-\log(1/\epsilon)-1$. Taking a union bound over the first $i$ links, $\mathcal{D}(Q_i^1)<r-\log(1/\epsilon)-\log i-1$ w.p. b.a.b. $\dot{\epsilon}$. \end{proof}

\begin{lemma}\label{lem:Omegaij} For every $1<i\leq L$, and $1<j\leq w-L+1$, \[\mathcal{D}(Q_i^j)\geq rj-\mathcal{L}_{ij}\] fails w.p. b.a.b. $\dot{\epsilon}$, so long as $\log({w_T}/{\epsilon})=o(r)$, where $\mathcal{L}_{ij} = j(1+o(1))(\log(ij/\epsilon)+1)+\log((j(1+o(1))+1)/\epsilon)+\log(ij)+1$, and the $o(1)$ term is $(\log(ij/\epsilon)+1)/r$.\end{lemma}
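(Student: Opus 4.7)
The plan is to prove the bound by induction on $j$, with the base case $j=1$ already provided by Lemma~\ref{lem:Omegaii}. For a fixed $j$, I will also induct on $i$, using $\mathcal{D}(Q_1^j) = rj$ as the base case in $i$. Throughout, I use the recursive selection procedure described just before Lemma~\ref{lem:Omegaii}: at the $j\textsuperscript{th}$ selection step over the $i\textsuperscript{th}$ link, we expose $r_{ij}$ fresh dense packets among the active partition packets, where $r_{ij}$ is the largest non-negative integer with $r_{ij} \leq \mathcal{D}_p(Q_i^j) - \sum_{1\leq \tau<j} r_{i\tau}$. Since we are in the identical-links regime and the number of dense packets can only shrink as $i$ grows, one has $r_{ij}\leq r$, which is exactly the structural hypothesis of Lemma~\ref{lem:VerticalT}.

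Next, I would form the transfer matrix at the $i\textsuperscript{th}$ node restricted to the selected dense arrivals from the $(i-1)\textsuperscript{st}$ link and to the selected outgoing packets in the first $j$ active partitions. By the ``arrivals before departures across partition boundaries'' property that motivated the partitioning, this matrix contains an $rj\times n$ sub-matrix $T'$ of the block-triangular form of Lemma~\ref{lem:VerticalT}, with $n=\sum_{1\leq \tau \leq j} r_{i\tau}$, $r_{\max}\leq r$, and $r_{\min} = \min_\tau r_{i\tau}$. Applying Lemma~\ref{lem:VerticalT} with $\gamma$ chosen so that the bound $u(1-2^{-r_{\max}})2^{-\gamma+n-rj+(r-r_{\min})(u-1)} \leq \dot\epsilon/(ij)$ holds (with $u=\lceil (n-\gamma)/r_{\min}\rceil$), and then using Lemma~\ref{lem:DensityTM}, I obtain
\[
\mathcal{D}(Q_i^j) \geq r j -\gamma - (n-rj) \cdot \mathbf{1}[n<rj],
\]
modulo the failure event of probability at most $\dot\epsilon/(ij)$.

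Inserting the inductive bounds on $r_{i\tau}$ (which by construction satisfy $r_{i\tau}\geq r - (\mathcal{L}_{i\tau}-\mathcal{L}_{i,\tau-1})$ and $r_{i\tau} = r - O(\log(ij/\epsilon)+1)$ when $\log(w_T/\epsilon)=o(r)$) gives $r_{\min} = r(1-o(1))$ with the $o(1)$ equal to $(\log(ij/\epsilon)+1)/r$, so that $u\leq j(1+o(1))$. Solving for $\gamma$ then yields the contribution $j(1+o(1))(\log(ij/\epsilon)+1)+\log((j(1+o(1))+1)/\epsilon)$ to $\mathcal{L}_{ij}$; the additional $\log(ij)+1$ absorbs the $\dot\epsilon/(ij)$ deflation used for the union bound described below.

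Finally, I would close the induction by taking a union bound over all $(s,\tau)$ with $1\leq s\leq i$, $1\leq \tau \leq j$ and over the failure event in the present step, which gives total failure probability bounded by $ij\cdot \dot\epsilon/(ij) = \dot\epsilon$, as claimed. The main obstacle I expect is the bookkeeping in the last paragraph: verifying that the recursion $r_{ij} = r - (\mathcal{L}_{ij}-\mathcal{L}_{i,j-1})$ stabilizes with the stated $o(1)$ term, and that the contribution of $(r-r_{\min})(u-1)$ in the exponent of Lemma~\ref{lem:VerticalT} does not blow up under the hypothesis $\log(w_T/\epsilon)=o(r)$; this is where the choice of partition length (see footnote~\ref{ftnt:PartitionLength}) is implicitly used.
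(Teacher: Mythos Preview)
Your approach is essentially the paper's: induct over $(i,j)$, restrict the transfer matrix at node $v_i$ to the first $j$ active partitions, observe it contains an $rj\times n$ block--lower--triangular submatrix of the type in Lemma~\ref{lem:VerticalT}, choose $\gamma$ so that the rank bound fails with probability at most $\dot\epsilon$ (up to the $ij$ deflation for the union bound), invoke Lemma~\ref{lem:DensityTM}, and use $r_{\min}=r(1-o(1))$ under $\log(w_T/\epsilon)=o(r)$ to control the exponent $(r-r_{\min})(u-1)$.

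Two small slips to fix. First, the columns of $T'$ are indexed by the dense packets arriving over link $i-1$, so $n=\sum_{\tau\le j} r_{i-1,\tau}$ and $r_{\min}=\min_\tau r_{i-1,\tau}$, not $r_{i,\tau}$; the paper in fact identifies $r_{\min}$ with $r_{i-1,1}=r-\log(1/\epsilon)-1$ via Lemma~\ref{lem:Omegaii}. Second, your displayed bound has a sign error: from $r(T')\ge n-\gamma$ and Lemma~\ref{lem:DensityTM} one gets $\mathcal{D}(Q_i^j)\ge n-\gamma = rj-\gamma-(rj-n)$, not $rj-\gamma-(n-rj)$; equivalently, the paper simply absorbs $rj-n\ge 0$ into $\gamma$ by taking $\gamma=n-rj+\text{(positive terms)}$. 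With these corrections your sketch matches the paper's proof.
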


\begin{proof} Fix $1<i\leq L$. For every $1<s\leq i$ and $1<\tau\leq j$, except $(s,\tau)=(i,j)$, suppose $\mathcal{D}(Q_s^{\tau})\geq \mathcal{D}_p(Q_s^{\tau})$, where $\mathcal{D}_p(Q_s^{\tau})=r\tau-\tau(1+o(1))(\log(1/\epsilon)+1)-\log((\tau(1+o(1))+1)/\epsilon)-1$, and the $o(1)$ term is $(\log(1/\epsilon)+1)/r$, and $\mathcal{D}(Q_s^{1})\geq \mathcal{D}_p(Q_s^{1})$, where $\mathcal{D}_p(Q_s^{1})=r-\log(1/\epsilon)-1$. Let $r_{\text{min}}=\min_{\tau}r_{\tau}$, and $r_{\text{max}}=\max_{\tau}r_{\tau}$. Let $r_{\tau}=r_{i-1,\tau}$, for every $1\leq \tau\leq j$, and $n=\mathcal{D}_p(Q_{i-1}^{j})=\sum_{1\leq\tau\leq j}r_{\tau}$. Let us define $T$ as in the proof of Lemma~\ref{lem:Omegaii}. Let $T'$ be $T$ restricted to the packets in the first $j$ active partitions over the $i\textsuperscript{th}$ link. Then, $T'$ includes an $rj\times n$ sub-matrix with a structure similar to the matrix $T$ as in Lemma~\ref{lem:VerticalT}. Thus by applying Lemma~\ref{lem:VerticalT}, for every $0\leq \gamma\leq n-1$, $\Pr\{r(T')<n-\gamma\}\leq u(1-2^{-r_{\text{max}}})2^{-\gamma+n-rj+(r-r_{\text{min}})(u-1)}$, where $u=\lceil (n-\gamma)/r_{\text{min}}\rceil$. It is not difficult to see that, by our method of constructing the dense collection, it follows that $r_{\text{min}}=r_1$. Further by applying Lemma~\ref{lem:Omegaii}, $r_1=\mathcal{D}_p(Q_{i-1}^1)=r-\log(1/\epsilon)-1$. Thus, $u\leq \lceil rj/r_{1}\rceil=\lceil (1+o(1))j\rceil\leq (1+o(1))j+1$, since $r_1=r(1-o(1))$, given $\log(w_T/\epsilon)=o(r)$, where the $o(1)$ term is $(\log(1/\epsilon)+1)/r$. Taking $\gamma=n-rj+(1+o(1))j(\log(1/\epsilon)+1)+\log(((1+o(1))j+1)/\epsilon)+1$, it follows that $\Pr\{r(T')<n-\gamma\}\leq \dot{\epsilon}$. Now, by applying Lemma~\ref{lem:DensityTM}, $\mathcal{D}(Q_i^j)<n-\gamma$ w.p. b.a.b. $\dot{\epsilon}$. Thus, $\mathcal{D}_p(Q_i^j)=n-\gamma$. Taking a union bound over the first $j$ active partitions of the first $i$ links, $\mathcal{D}(Q_i^j)<rj-(1+o(1))j(\log(ij/\epsilon)+1)-\log(((1+o(1))j+1)/\epsilon)-\log(ij)-1$ w.p. b.a.b. $\dot{\epsilon}$, where the $o(1)$ term is $(\log(ij/\epsilon)+1)/r$. This completes the proof.\end{proof}


The result of Lemma~\ref{lem:Omegaij} lower bounds the number of dense packets at the sink node as follows.

\begin{lemma}\label{lem:DensityBound}The inequality \begin{eqnarray}\label{eq:DensityBound}
 \lefteqn{\mathcal{D}(Q_L) \geq w_T\varphi/L -w_T\varphi/L \sqrt{(1/\dot{\varphi})\log(w_T/\dot{\epsilon})} -} \nonumber\\
   && {}-(w_T/L)\log(w_T/\dot{\epsilon})-(w_T/L\varphi)\log^{2}(w_T/\epsilon)- \nonumber \\
   && -(w_T/L\varphi)\log(w_T/\epsilon) -\log(w_T/{\epsilon}) - \nonumber\\
   && - \log(w_T/L) - 1
\end{eqnarray} fails w.p. b.a.b. $\epsilon$, where $w\sim\left(p N_T L^2/\log(N_T L/\epsilon)\right)^{{1}/{3}}$.\end{lemma}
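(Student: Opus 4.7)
The plan is to apply Lemma~\ref{lem:Omegaij} at $i=L$ and $j=w-L+1=w_T/L$, the total number of active partitions on the last link, since $\mathcal{D}(Q_L)\geq\mathcal{D}(Q_L^{w_T/L})$. Before invoking the lemma, the probability budget has to be unified: Lemma~\ref{lem:Omegaij} fails w.p.\ b.a.b.\ $\dot\epsilon$ for a single $(i,j)$, but the recursive construction requires the inequality to hold simultaneously at every $(s,\tau)$ with $1\leq s\leq L$ and $1\leq\tau\leq w_T/L$, and in addition every Chernoff estimate $\varphi_{ij}\geq r$ needs its own $\dot\epsilon$ budget. A union bound across the $w_T$ active partitions therefore rescales $\dot\epsilon\mapsto\dot\epsilon/w_T$ throughout, which in particular replaces $\gamma^{*}\sim\sqrt{\log(2/\epsilon)/\dot\varphi}$ in $r=(1-\gamma)\varphi$ by $\gamma\sim\sqrt{\log(w_T/\dot\epsilon)/\dot\varphi}$.

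Next, I would substitute $i=L$, $j=w_T/L$, $ij=w_T$ into $\mathcal{L}_{ij}$ and collect terms. The leading quantity $r\cdot(w_T/L)$ contributes $\varphi w_T/L$ minus the Chernoff penalty $\gamma\varphi w_T/L=(\varphi w_T/L)\sqrt{\log(w_T/\dot\epsilon)/\dot\varphi}$, which is precisely the second term of~\eqref{eq:DensityBound}. Expanding $(1+o(1))$ in $\mathcal{L}_{L,w_T/L}$ with $o(1)=(\log(w_T/\epsilon)+1)/r\approx\log(w_T/\epsilon)/\varphi$ splits the $j(1+o(1))(\log(ij/\epsilon)+1)$ summand into the main term $(w_T/L)\log(w_T/\dot\epsilon)$, the quadratic residual $(w_T/L\varphi)\log^{2}(w_T/\epsilon)$, and the linear residual $(w_T/L\varphi)\log(w_T/\epsilon)$; the tail $\log(((w_T/L)(1+o(1))+1)/\epsilon)+\log(w_T)+1$ is absorbed into $\log(w_T/\epsilon)+\log(w_T/L)+1$. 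Summing all contributions reproduces~\eqref{eq:DensityBound}, and the union-bound construction makes the whole inequality fail w.p.\ b.a.b.\ $\epsilon$.

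The final step is the optimization of $w$. Writing $w_T\sim Lw$ and $\varphi=pN_T/w$, one has $\varphi w_T/L=pN_T(1-(L-1)/w)$, so the main term already hides a ``border'' loss $pN_T(L-1)/w$ that decreases with $w$, while the Chernoff penalty $(w_T/L)\sqrt{\varphi\log(w_T/\dot\epsilon)}\sim\sqrt{pN_Tw\log(N_TL/\epsilon)}$ grows with $w$. Balancing these two competing quantities via $pN_TL/w\sim\sqrt{pN_Tw\log(N_TL/\epsilon)}$ yields $w^{3}\sim pN_TL^{2}/\log(N_TL/\epsilon)$, i.e., the stated $w\sim(pN_TL^{2}/\log(N_TL/\epsilon))^{1/3}$. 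At this $w$, both residual terms $(w_T/L)\log$ and $(w_T/L\varphi)\log^{2}$ are of lower order in the asymptotic regime $pN_T\gg L\log(N_TL/\epsilon)$.

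The main obstacle I expect is the bookkeeping required to propagate failure probabilities through the $L$-deep recursion of Lemma~\ref{lem:Omegaij} together with the $w_T$ Chernoff estimates, since the error $\mathcal{L}_{ij}$ is itself a function of the rescaled $\epsilon/w_T$ and contains the $o(1)$ factor $(\log(w_T/\epsilon)+1)/r$ that must remain small after the optimization. Verifying that the chosen $w\sim(pN_TL^{2}/\log(N_TL/\epsilon))^{1/3}$ preserves the hypothesis $\log(w_T/\epsilon)=o(r)$ of Lemma~\ref{lem:Omegaij} is what closes the loop and makes the whole argument internally consistent.
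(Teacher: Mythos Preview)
Your proposal is correct and follows essentially the same route as the paper: apply Lemma~\ref{lem:Omegaij} at $(i,j)=(L,w_T/L)$, combine with the Chernoff estimate $r=(1-\gamma^{*})\varphi$ (with $\gamma^{*}$ rescaled via the union bound over the $w_T$ active partitions), expand and match terms to~\eqref{eq:DensityBound}, and then balance the border loss $O(pN_TL/w)$ against the Chernoff penalty $O(\sqrt{pN_Tw\log(wL/\epsilon)})$ to obtain $w\sim(pN_TL^{2}/\log(N_TL/\epsilon))^{1/3}$. The only minor slip is the phrase ``rescales $\dot\epsilon\mapsto\dot\epsilon/w_T$ throughout'': the rescaling is needed only for the $w_T$ Chernoff events, since Lemma~\ref{lem:Omegaij} already incorporates the union bound over $(s,\tau)$ internally, but your subsequent substitution $ij=w_T$ into $\mathcal{L}_{ij}$ handles this correctly anyway.
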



\begin{proof} For the ease of exposition, let $v=w_T/L$. Lemma~\ref{lem:Omegaij} gives a lower bound on $\mathcal{D}(Q_L^{v})$. Thus, we can write: $\mathcal{D}(Q_L)\geq\mathcal{D}(Q_L^{v})\geq rv-v(1+o(1))\left(\log(w_T/\epsilon)+1\right)-\log((v (1+o(1)))/\epsilon)-\log w_T-1$, where the $o(1)$ term is $\left(\log(w_T/\epsilon)\right)/r$. This bound fails w.p. b.a.b. $\dot{\epsilon}$, given the success of the packet collection process. Further, $r=(1-o(1))\varphi$, where the $o(1)$ term is $\sqrt{(1/\dot{\varphi})\ln(w_T/\dot{\epsilon})}$. Thus, $\mathcal{D}(Q_L)\geq \varphi v -o(\varphi v)-v\log(w_T/\epsilon)-v-o(v\log(w_T/\epsilon))-o(v)-\log(v/\epsilon)-\log w_T-1$ fails w.p. b.a.b. ${\epsilon}$, where $o(\varphi v)\sim O(\varphi v \sqrt{(1/{\varphi})\log(w_T/{\epsilon})})$, and $o(v)\sim (v/\varphi)\log(w_T/\epsilon)$. By considering the dominant terms, the right-hand side of the last inequality can be written as \begin{equation}\label{eq:Temp1}\varphi v- O(v \sqrt{\varphi\log(w_T/\epsilon)})-O(v\log(w_T/\epsilon)).\end{equation} We now replace $\varphi$ and $v$ by $p N_T/w$ and $w$ ($v\sim w$), respectively, and rewrite the above as \begin{eqnarray}\label{eq:Temp2} && \hspace*{-1 cm} p N_T - O(p N_T L/w) - \nonumber \\
& & O(\sqrt{p N_T w\log(w L/\epsilon)})-O(w\log(w L/\epsilon)).\end{eqnarray} We would like $\mathcal{D}(Q_L)\geq(1-o(1))p N_T$. Each $O(.)$ term needs to be $o(p N_T)$. When considering the third term, it is easy to show that we need $w\log(w L/\epsilon)=o(p N_T)$. When this condition holds, the second term dominates the third one. We need to specify $w$ with some care in order to minimize \begin{equation}\label{eq:Temp3} O(p N_T L/w) + O(\sqrt{p N_T w\log(w L/\epsilon)}).\end{equation} We define $w$ as \[\sqrt[3]{\frac{p N_T L^2}{\log(N_T L/\epsilon)}}.\] This choice of $w$ ensures that the $O(.)$ terms are $o(p N_T)$.\end{proof}

Let $n_T$ be equal to the right-hand side of the inequality \eqref{eq:DensityBound}. Thus, $Q_L$ fails to include an $n_T\times k$ dense sub-matrix w.p. b.a.b. $\epsilon$. By applying Lemma~\ref{lem:DenseRankProb}, the probability of $\{\text{rank}(Q_L)<k\}$ is b.a.b. $\epsilon$, so long as $k\leq n_T-\log(1/\epsilon)$. We replace $\epsilon$ with $\dot{\epsilon}$ everywhere. Then, a dense code fails to transmit $k$ message vectors w.p. b.a.b. $\epsilon$, so long as $k\leq n_T - \log(1/\epsilon)-1$.

In the asymptotic setting as $N_T$ goes to infinity, $n_T$ can be written as \[p N_T-(1+o(1))(p N_T L/w+\sqrt{p N_T w\log(wL/\epsilon)}+w\log(wL/\epsilon)).\] We rewrite the last inequality as \[k\leq p N_T-(1+o(1))(p N_T L/w+\sqrt{p N_T w\log(wL/\epsilon)}+w\log(wL/\epsilon)) -\log(1/\epsilon)-1.\] Let $k_{\text{max}}$ be the largest integer $k$ satisfying this inequality. Thus, $k_{\text{max}}\sim p N_T$, as $n_T\sim p N_T$ and $\log(1/\dot{\epsilon})=o(n_T)$. The following result can be shown by replacing $N_T$ with $k/p$ in the right-hand side of the latter inequality.

\begin{theorem}\label{thm:DenseCodesPoissonActual} The $\epsilon$-constrained coding delay of a dense code over a line network of $L$ identical links with regular traffics and Bernoulli losses with parameter $p$ is b.a.b. \[\frac{1}{p}\left(k+(1+o(1))\left(\frac{kL}{w}+\sqrt{k\left(w\log\frac{wL}{\epsilon}\right)}+w\log\frac{wL}{\epsilon}\right)\right)\] where $w\sim \left(k L^2/\log(k L/p \epsilon)\right)^{1/3}$, and the $o(1)$ term goes to $0$ as $k$ goes to infinity.\footnote{Similarly, in the following, the $o(1)$ term is defined with respect to $k$.}\end{theorem}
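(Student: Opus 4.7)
The plan is to solve the sufficient condition $k\leq n_T-\log(1/\epsilon)-1$ (derived from Lemmas~\ref{lem:DensityBound} and~\ref{lem:DenseRankProb} together with the union bound and the replacement $\epsilon\mapsto\dot{\epsilon}$) for the smallest $N_T$. Since the $\epsilon$-constrained coding delay is by definition the infimum of $N_T$ for which decoding succeeds with probability at least $1-\epsilon$, any $N_T$ meeting this inequality upper bounds the coding delay. So the proof is essentially an inversion of the relation between $k$ and $N_T$ already established in the discussion before the statement.

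First, I would read off from the asymptotic form
\[
n_T=pN_T-(1+o(1))\Bigl(\tfrac{pN_TL}{w}+\sqrt{pN_T\,w\log(wL/\epsilon)}+w\log(wL/\epsilon)\Bigr),
\]
together with the near-optimal choice $w\sim(pN_TL^2/\log(N_TL/\epsilon))^{1/3}$, that each of the correction terms is $o(pN_T)$; this is precisely the property that motivated the choice of $w$ in Lemma~\ref{lem:DensityBound}. Hence $k\sim pN_T$ to leading order, which yields $N_T\sim k/p$ and accounts for the dominant summand $k/p$ in the claimed bound.

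Second, I would plug the leading-order relation $N_T\sim k/p$ back into the subleading terms in a self-consistent way: $pN_TL/w\mapsto(1+o(1))kL/w$, $\sqrt{pN_T\,w\log(wL/\epsilon)}\mapsto(1+o(1))\sqrt{k\,w\log(wL/\epsilon)}$, and the expression for the near-optimal $w$ becomes $w\sim(kL^2/\log(kL/p\epsilon))^{1/3}$ (since inside the logarithm $N_T\mapsto k/p$ perturbs the argument only by an $o(1)$ additive amount). The additive term $\log(1/\epsilon)+1$ coming from Lemma~\ref{lem:DenseRankProb} is absorbed into the $(1+o(1))$ factor, because it is dominated by $w\log(wL/\epsilon)$ for the chosen $w$. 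Rearranging to bound $N_T$ from above then gives the expression in the statement.

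The main subtlety is verifying that the self-referential substitution $N_T\mapsto k/p$ inside each correction term (and inside $w$, and inside the logarithms appearing in both) produces only multiplicative $(1+o(1))$ perturbations and no change in scaling. Linear terms scale by $(1+o(1))$ directly, the square root scales by $(1+o(1))^{1/2}=1+o(1)$, and logarithmic factors shift by $o(1)$ additively, so they are absorbed into the same $(1+o(1))$. Once this is in place, collecting the four contributions $k$, $kL/w$, $\sqrt{kw\log(wL/\epsilon)}$, and $w\log(wL/\epsilon)$, multiplying by $1/p$, and using $w\sim(kL^2/\log(kL/p\epsilon))^{1/3}$ yields exactly the stated upper bound on the coding delay.
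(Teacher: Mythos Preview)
Your proposal is correct and follows essentially the same approach as the paper: starting from the sufficient condition $k\le n_T-\log(1/\epsilon)-1$ obtained by combining Lemma~\ref{lem:DensityBound} and Lemma~\ref{lem:DenseRankProb} (after the $\epsilon\mapsto\dot\epsilon$ replacement), observing that the correction terms are $o(pN_T)$ so that $k\sim pN_T$, and then substituting $N_T\mapsto k/p$ in the subleading terms and in the expression for $w$. Your explicit verification that this self-consistent substitution only perturbs each term by a $(1+o(1))$ factor is in fact more detailed than the paper's one-line justification ``can be shown by replacing $N_T$ with $k/p$ in the right-hand side.''
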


It is worth noting that Theorem~\ref{thm:DenseCodeRegularLossless} is not a special case of Theorem~\ref{thm:DenseCodesPoissonActual} with $p=1$.\footnote{This arises from the fact that the latter result is based on the condition that $\gamma^{*}$ goes to $0$ (i.e., $N_T/w$ has to go to infinity) as $N_T$ goes to infinity. That is, the length of the partitions needs to go to infinity with $N_T$. However, thinking of partitions in the proof of the former result, it can be seen that each partition has length one.} In fact, Theorem~\ref{thm:DenseCodeRegularLossless} provides a tighter bound compared to the result of Theorem~\ref{thm:DenseCodesPoissonActual} with $p = 1$.

We now study the average coding delay of dense codes with respect to the traffics with deterministic regular transmission opportunities and Bernoulli losses. It should be clear that, in this case, the deviation of the number of packets per partition should not be taken into account. Thus, by replacing $r$ with $\varphi$ in Lemmas~\ref{lem:Omegaii} and~\ref{lem:Omegaij}, and redefining $w$ as $\sqrt{{p N_T L}/{\log(N_T L/\epsilon)}}$, we have the following result.\footnote{Note that the latter choice of $w$ is much larger than that in Lemma~\ref{lem:DensityBound}. This is because, in this case, there is no gap between the lower bound on the number of packet transmissions in each partition and the expectation, and hence, the partitions do not need to be sufficiently long (see Footnote~\ref{ftnt:PartitionLength}).}

\begin{theorem}\label{thm:DenseCodesPoissonAverage} The $\epsilon$-constrained average coding delay of a dense code over a network similar to Theorem~\ref{thm:DenseCodesPoissonActual} is b.a.b. \[\frac{1}{p}\left(k +(1+o(1))\left(\frac{k L}{w}+w\log\frac{wL}{\epsilon}\right)\right)\] where $w\sim \left({kL/\log(kL/p\epsilon)}\right)^{1/2}$.\end{theorem}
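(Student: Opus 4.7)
The plan is to retrace the derivation of Theorem~\ref{thm:DenseCodesPoissonActual}, but in a setting where the randomness of the traffic has been averaged out, so that the Chernoff deviation step becomes unnecessary. Recall that in the proof of Theorem~\ref{thm:DenseCodesPoissonActual}, the partition parameter $r=(1-\gamma^{*})\varphi$ was chosen with $\gamma^{*}\sim(2/\varphi\ln(2/\epsilon))^{1/2}$ so as to guarantee, w.p.\ b.a.b.\ $\dot{\epsilon}$, that each active partition contains at least $r$ packets. When the quantity of interest is the average coding delay, only the expected number of packets per partition matters, and this expectation equals $\varphi=pN_T/w$ exactly. The core idea is thus to replace $r$ by $\varphi$ throughout the analysis of Section~\ref{sec:BernoulliLossRegularTraffic}, in particular in Lemmas~\ref{lem:Omegaii} and~\ref{lem:Omegaij}.

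First, I would restate Lemma~\ref{lem:Omegaii} with $r$ replaced by $\varphi$, yielding $\mathcal{D}(Q_i^1)\geq\varphi-\log(1/\epsilon)-\log i-1$ on average over the traffic, failing w.p.\ b.a.b.\ $\dot{\epsilon}$ over the random code; and similarly restate Lemma~\ref{lem:Omegaij} with the same substitution. The proofs carry over verbatim, since the combinatorial structure of the transfer matrix (as a sub-matrix of the form in Lemma~\ref{lem:VerticalT}) is untouched and only the block-size parameter changes.

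Next, I would redo the density lower bound of Lemma~\ref{lem:DensityBound} with this substitution in place. The key simplification is that the relation $r=(1-o(1))\varphi$ with $o(1)$-term $\sqrt{(1/\dot{\varphi})\ln(w_T/\dot{\epsilon})}$ is no longer needed, so the term $O(v\sqrt{\varphi\log(w_T/\epsilon)})$ appearing in~\eqref{eq:Temp1} (equivalently, the middle $O(\cdot)$ term in~\eqref{eq:Temp2}) drops out entirely. After substituting $\varphi=pN_T/w$ and $v\sim w$, what remains is
\[
\mathcal{D}(Q_L)\geq pN_T-O\!\left(\frac{pN_T L}{w}\right)-O\!\left(w\log\frac{wL}{\epsilon}\right),
\]
failing w.p.\ b.a.b.\ $\epsilon$ over the code.

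Finally, I would optimize over $w$ by balancing the two remaining $O(\cdot)$ terms, which gives $w^2\sim pN_TL/\log(wL/\epsilon)$ and hence $w\sim(pN_TL/\log(N_TL/\epsilon))^{1/2}$, ensuring that both error terms are $o(pN_T)$. Applying Lemma~\ref{lem:DenseRankProb} to enforce $k\leq n_T-\log(1/\epsilon)$, and then inverting $n_T\sim pN_T$ to solve for $N_T$ as a function of $k$, produces the stated bound with $w\sim(kL/\log(kL/p\epsilon))^{1/2}$. I expect the main subtlety, and the step that merits the most careful writing, to be the first one: justifying cleanly that in the averaged-traffic regime linearity of expectation replaces the need for concentration of $\varphi_{ij}$, so that the Chernoff-induced deviation $\sqrt{pN_Tw\log(wL/\epsilon)}$ disappears while the union bounds over the code randomness continue to go through exactly as in the proof of Theorem~\ref{thm:DenseCodesPoissonActual}.
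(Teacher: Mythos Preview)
Your proposal is correct and mirrors the paper's own proof essentially step for step: replace $r$ by $\varphi$ in Lemmas~\ref{lem:Omegaii}--\ref{lem:Omegaij} and in the derivation of Lemma~\ref{lem:DensityBound}, observe that the Chernoff-induced term $O(\sqrt{pN_Tw\log(wL/\epsilon)})$ vanishes from~\eqref{eq:Temp2}, and then rebalance the two surviving $O(\cdot)$ terms to obtain $w\sim\sqrt{pN_TL/\log(N_TL/\epsilon)}$. The subtlety you flag---why averaging over the traffic legitimizes setting $r=\varphi$ while the code-side union bounds still go through---is exactly the point the paper leaves implicit, so making it explicit would only improve on the exposition.
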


\begin{proof}\renewcommand{\IEEEQED}{} The proof follows the same line as that of the proof of Theorem~\ref{thm:DenseCodesPoissonActual}, except that $r$ needs to be replaced with $\varphi$ in the proof of Lemma~\ref{lem:DensityBound}. Thus, the term $O(v\sqrt{\varphi\log(w_T/\epsilon)})$ in~\eqref{eq:Temp1} and $O(\sqrt{p N_T w\log(w L/\epsilon)})$ in~\eqref{eq:Temp2} will disappear. Then, it should not be hard to see that, in this case, $w$ needs to be chosen in order to minimize $O(p N_T L/w)$ $+$ $O(w\log(w L/\epsilon))$, instead of~\eqref{eq:Temp3}. This can be done by redefining $w$ as \[\hspace{2.75 in}\sqrt{\frac{p N_T L}{\log(N_T L/\epsilon)}}. \hspace{2.75 in}\IEEEQEDclosed\]\end{proof}

\subsection{Non-Identical Links}
The preceding results regarding the identical links immediately serve as upper bounds for the case of non-identical links with parameters $\{p_i\}_{1\leq i\leq L}$, by replacing $p$ with $\min_{1\leq i\leq L}p_i$. The results however might not be very tight, e.g., for the case where, for some $1\leq i\leq L$, $p_i$ is much larger than $p$. Thus the values and the ordering of $\{p_i\}$ needs to be taken into consideration to derive tighter bounds. In particular, for every $1\leq i< L$, depending on whether the $i\textsuperscript{th}$ or the $(i+1)\textsuperscript{th}$ link has a larger parameter, Lemma~\ref{lem:VerticalT} or~\ref{lem:HorizontalT} is useful to lower bound the rank of the transfer matrix at the $i\textsuperscript{th}$ node. The rest of the analysis remains the same.

In the following, however, we state the main results for a special case of non-identical links, where there is a single worst link (a unique link with the smallest success parameter).

\begin{theorem}\label{thm:DenseCodesRegularBernoulliActualNon-Identical} Consider a sequence of parameters $\{p_i\}_{1\leq i\leq L}$ with a unique minimum $p \doteq \min_{i}p_i$. Then, the $\epsilon$-constrained coding delay of a dense code over a line network of $L$ links with deterministic regular traffics and Bernoulli losses with non-identical parameters $\{p_i\}$ is b.a.b. \[\frac{1}{p}\left(k+(1+o(1))\left(\frac{kL}{w}+\sqrt{k\left(w\log\frac{wL}{\epsilon}\right)}\right)\right)\] where $w\sim\left(k L^2/\log(k L/p\epsilon)\right)^{1/3}$.\end{theorem}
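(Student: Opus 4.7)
The plan is to mirror the proof of Theorem~\ref{thm:DenseCodesPoissonActual} while exploiting the uniqueness of the bottleneck link to strip the additive $w\log(wL/\epsilon)$ summand from the delay bound. I would begin with the same partitioning machinery: divide $(0,N_T]$ into $w$ equal subintervals, define the active partitions exactly as in the identical-link section, and apply the Chernoff bound of \eqref{eq:gammastar} separately to each link. This yields, for every link $i$ and with total failure probability absorbable into a union bound over the $w_T = L(w-L+1)$ active partitions, a lower bound $r_i \doteq (1-\gamma^{*}_i)\varphi_i$ on the number of packet transmissions in each active partition, with $\varphi_i = p_i N_T/w$. Writing $r$ for the value $r_{i^{*}}$ attained on the unique worst link, uniqueness makes $r_i - r$ a positive constant fraction of $r$ for every $i \neq i^{*}$.

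The recursive density-tracking step then follows the spirit of Lemmas~\ref{lem:Omegaii} and~\ref{lem:Omegaij}, with one essential refinement: at the node feeding link $i$, the diagonal blocks of the transfer matrix have $r_i$ rows and a number of columns equal to the per-partition incremental dense count arriving from above. I would invoke Lemma~\ref{lem:VerticalT} whenever $r_i$ dominates the incoming count and Lemma~\ref{lem:HorizontalT} in the reverse case. The crux is that at every non-worst hop the separation between $r_i$ and the incoming count is a fixed positive fraction of $r$, which makes the geometric exponent $(r-r_{\min})(u-1)$, or its horizontal counterpart $(r_{\min}-r)(u-1)$, strongly negative. Consequently the value of $\gamma$ needed to push the failure probability below $\dot\epsilon$ is merely $O(\log(w_T/\epsilon))$ \emph{total} across the whole cascade of partitions at that hop, rather than growing linearly with the partition index $j$ as it did in the identical-link proof. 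Only the single bottleneck hop behaves as in Lemma~\ref{lem:Omegaij} and contributes an $O(v\log(w_T/\epsilon))$ rank loss; because this appears exactly once along the chain rather than once per link, it sits safely below the other dominant contributions once $w$ is chosen appropriately, so the explicit $w\log(wL/\epsilon)$ summand from Theorem~\ref{thm:DenseCodesPoissonActual} is not needed.

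With the density bound for $Q_L$ in place, I would apply Lemma~\ref{lem:DenseRankProb} to convert it into a lower bound on the success probability of the dense code, then solve for the smallest $N_T$ for which $k$ message vectors are decodable. The $w\log(wL/\epsilon)$ term being absent, the only $O(\cdot)$ contributions left to balance are the edge-effect loss $O(pN_TL/w)$ and the Chernoff-deviation loss $O(\sqrt{pN_Tw\log(wL/\epsilon)})$; equating them reproduces the cube-root choice $w \sim (kL^2/\log(kL/p\epsilon))^{1/3}$ and, after substitution, the coding-delay expression claimed in the theorem.

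The main obstacle is a careful execution of the non-worst-hop step: one must verify that the slack $r_i - r$ at each non-bottleneck link translates into a rank loss that is genuinely \emph{independent} of the partition index (not merely sub-linear), and that the $L-1$ such contributions aggregate only to an $O(L\log(w_T/\epsilon))$ term which is dominated by $kL/w$. This is exactly the place where the uniqueness hypothesis is essential — if the minimum were attained by more than one link, at least two hops would be forced into the bottleneck regime and the $v\log(w_T/\epsilon)$ loss would reappear, putting the dropped summand back into the bound.
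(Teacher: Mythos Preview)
Your plan matches the paper's (only sketched) route: partition time, Chernoff-bound each link separately, and at every hop apply Lemma~\ref{lem:VerticalT} or Lemma~\ref{lem:HorizontalT} according to which of the outgoing and incoming per-partition counts is larger. The paper supplies no further detail beyond that sentence, so your write-up is already more explicit than the original.

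Two small corrections are worth noting, though neither breaks the argument. First, the bottleneck hop does \emph{not} revert to the identical-link regime of Lemma~\ref{lem:Omegaij}: under the unique-minimum hypothesis its neighbouring link is strictly faster, so the transfer matrix there also enjoys the constant-fraction slack between row- and column-block sizes, and by your own computation the rank loss is again $O(\log(w_T/\epsilon))$, not $O(v\log(w_T/\epsilon))$. Hence \emph{no} hop produces a linear-in-$j$ loss, and the $w\log(wL/\epsilon)$ summand simply never arises --- you do not need to argue that it is dominated. Second, your ``once along the chain rather than once per link'' contrast with the identical case is off: in Lemma~\ref{lem:Omegaij} the $i$-dependence enters only through $\log i$, so the $O(v\log(w_T/\epsilon))$ loss there is already once-per-chain, not once-per-link. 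That said, even your conservative accounting suffices, since with the cube-root $w$ one has $w\log(wL/\epsilon)\sim (kL^{2}\log^{2}(kL/\epsilon))^{1/3}=o\!\big((k^{2}L\log(kL/\epsilon))^{1/3}\big)$, which is absorbed into the $(1+o(1))$ factor anyway.
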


\begin{theorem}\label{thm:DenseCodesRegularBernoulliAverageNon-Identical} The $\epsilon$-constrained average coding delay of a dense code over a network similar to Theorem~\ref{thm:DenseCodesRegularBernoulliActualNon-Identical} is b.a.b. \[\frac{1}{p}\left(k +(1+o(1))\left(\frac{k L}{w}\right)\right)\] where $w\sim k/\left(p\log(k L/p\epsilon)f(k)\right)$ and $f(k)$ goes to infinity arbitrarily slow, as $k$ goes to infinity.\end{theorem}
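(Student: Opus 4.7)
The plan is to mirror the passage from Theorem~\ref{thm:DenseCodeRegularLossless}-type bounds to their averaged counterparts, applied on top of the non-identical-link refinement already carried out in Theorem~\ref{thm:DenseCodesRegularBernoulliActualNon-Identical}. Concretely, I would start from the density-bound chain that produces Theorem~\ref{thm:DenseCodesRegularBernoulliActualNon-Identical}: partition $(0,N_T]$ into $w$ pieces, track $\mathcal{D}(Q_L^{v})$ via the recursion built from Lemmas~\ref{lem:Omegaii} and~\ref{lem:Omegaij}, but at each node use whichever of Lemma~\ref{lem:VerticalT} or Lemma~\ref{lem:HorizontalT} is appropriate given the ordering of the $\{p_i\}$'s around the unique worst link with parameter $p$. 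This is the step that, in the actual-delay analysis, removed the standalone $O(w\log(wL/\epsilon))$ contribution from the density deficit, leaving only $O(pN_T L/w)$ and $O(\sqrt{pN_T\, w\log(wL/\epsilon)})$ as residual terms.

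The averaging step is exactly the one used in the proof of Theorem~\ref{thm:DenseCodesPoissonAverage}: because we are bounding the \emph{average} coding delay over the traffics, the Chernoff gap $\gamma^{*}$ in~\eqref{eq:gammastar} between the per-partition count $\varphi_{ij}$ and its expectation $\varphi$ need not be absorbed; one simply replaces $r$ by $\varphi$ in Lemmas~\ref{lem:Omegaii} and~\ref{lem:Omegaij}. As noted in the proof of Theorem~\ref{thm:DenseCodesPoissonAverage}, this kills the $O(v\sqrt{\varphi\log(w_T/\epsilon)})$ summand in~\eqref{eq:Temp1}, which after $v\sim w$ and $\varphi\sim pN_T/w$ is precisely the $O(\sqrt{pN_T w\log(wL/\epsilon)})=O(\sqrt{k\,w\log(wL/\epsilon)})$ term. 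Combined with the non-identical improvement above, the only remaining residual on the right-hand side of~\eqref{eq:Temp2} is $O(pN_T L/w)$, so that, with $N_T\sim k/p$, the density deficit at $v_L$ translates into a coding-delay overshoot of order $\frac{1}{p}\cdot O(kL/w)$.

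It then remains to pick $w$. Since the only cost term left is $O(kL/w)$, we want $w$ as large as possible, but two constraints from the lemma hypotheses must continue to hold: (i) $\log(w_T/\epsilon)=o(r)=o(\varphi)$, which with $\varphi\sim pN_T/w$ forces $w\log(wL/\epsilon)=o(pN_T)=o(k)$; and (ii) the residual $kL/w$ must itself be $o(k)$, i.e.\ $w=\omega(L)$. Both are satisfied by setting $w\sim k/\bigl(p\log(kL/p\epsilon)\,f(k)\bigr)$ for any function $f(k)\to\infty$ arbitrarily slowly: under this choice $w\log(wL/\epsilon)=O(k/(pf(k)))=o(k)$, and $kL/w=pL\log(kL/p\epsilon)f(k)$ is $o(k)$ because $f$ grows sufficiently slowly. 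Substituting yields the claimed bound $\tfrac{1}{p}(k+(1+o(1))\cdot kL/w)$.

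The main obstacle, as usual in this line of reasoning, is the bookkeeping that confirms the non-identical-link argument from Theorem~\ref{thm:DenseCodesRegularBernoulliActualNon-Identical} still applies after one replaces $r$ by $\varphi$ throughout: one must verify that Lemma~\ref{lem:HorizontalT}, applied at the unique worst link where the upstream node carries strictly more packets per active partition, still furnishes an $r_{\min}-r$ gap large enough that the exponential slack $2^{(r_{\min}-r)(u-1)}$ absorbs the union bound without reintroducing an $w\log(wL/\epsilon)$ additive cost. Given the uniqueness of the minimum $p$, this gap is a positive constant (uniform in $w$), so the verification is routine, and the rest of the calculation is the arithmetic above.
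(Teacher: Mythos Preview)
Your proposal is correct and follows essentially the same route the paper takes: combine the non-identical-link refinement behind Theorem~\ref{thm:DenseCodesRegularBernoulliActualNon-Identical} (which suppresses the standalone $w\log(wL/\epsilon)$ term via Lemma~\ref{lem:HorizontalT} at the unique worst link) with the averaging step of Theorem~\ref{thm:DenseCodesPoissonAverage} (replace $r$ by $\varphi$, killing the $\sqrt{k\,w\log(wL/\epsilon)}$ term), then push $w$ as large as the constraint $\log(w_T/\epsilon)=o(\varphi)$ allows. The paper does not spell out a proof for this theorem beyond indicating that these two reductions are to be combined, so your write-up is in fact more explicit than the paper's own treatment; your identification of the remaining constraint $w\log(wL/\epsilon)=o(k)$ and the resulting choice $w\sim k/(p\log(kL/p\epsilon)f(k))$ matches the statement exactly.
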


\section{Poisson Traffic: Lossless or Bernoulli Losses}\label{sec:PoissonTraffic}
In the case of the lossless Poisson traffic with parameter $\lambda$, the number of packets in each partition of length $N_T/w$ is a Poisson random variable with the expected value $\lambda N_T/w$. By applying the Chernoff bound to the Poisson random variable (see~\cite[Theorem~A.1.15]{AS:2008}), the main results in Section~\ref{sec:BernoulliLossRegularTraffic} are applicable to this network scenario, where $p$ is replaced by $\lambda$.

In the case of Bernoulli losses over a Poisson traffic with parameters $p$ and $\lambda$, respectively, it can be shown that the points in time at which the arrivals/departures occur follow a Poisson process with parameter $\lambda p$, and hence the number of packets in each partition has a Poisson distribution with the expected value $\lambda p N_T/w$. Thus the main results in Section~\ref{sec:BernoulliLossRegularTraffic} apply by replacing $p$ with $\lambda p$.

\section{Comparison with The Existing Literature}
The upper bounds on the $\epsilon$-constrained coding delay and average coding delay, derived in this paper, are valid for any arbitrary choice of $\epsilon$. However, in the following, to compare our results with those of~\cite{PFS:2005} and~\cite{DDHE:2009}, we focus on the case where $\epsilon$ goes to $0$ polynomially fast, as $k$ goes to infinity. For such a choice of $\epsilon$, the upper bounds on the coding delay and the average coding delay hold w.p. $1$, as $k$ goes to infinity.

\subsection{Identical Links}
In~\cite{PFS:2005}, the average coding delay of dense codes over the networks of length $2$ with deterministic regular transmissions and Bernoulli losses with parameter $p$ is shown to be upper bounded by $\frac{1}{p}(k+O(\sqrt{k\log k}))$. The result of Theorem~\ref{thm:DenseCodesPoissonAverage} indicates that the average coding delay of dense codes over the networks of length $L$ with similar traffics as above is upper bounded by $\frac{1}{p}(k+(1+o(1))(\sqrt{kL\log(kL)}))$. This is consistent with the result of~\cite{PFS:2005}, although the bound presented here provides more details.

The result of Theorem~\ref{thm:DenseCodesPoissonActual} suggests that the coding delay of dense codes over network scenarios as above is upper bounded by $\frac{1}{p}(k+(1+o(1))({k^2 L \log(kL)})^{1/3})$. One should note that there has been no result on the coding delay of dense codes over identical links in the existing literature. In fact, this was posed as an open problem in \cite{DDHE:2009}. It is also noteworthy that unlike the analysis of~\cite{DDHE:2009}, our analysis does not rely on the existence of a single worst link, and hence is applicable to the case of identical links.

By combining Theorems~\ref{thm:DenseCodesPoissonActual} and~\ref{thm:DenseCodesPoissonAverage}, it can be seen that the coding delay might be much larger than the average coding delay. This highlights the fact that the analysis of the average coding delay does not provide a complete picture of the speed of convergence of dense codes to the min-cut capacity of line networks with identical links.

\subsection{Non-Identical Links}
In~\cite{DDHE:2009}, the average coding delay of dense codes over the networks of length $L$ with deterministic regular transmission opportunities and Bernoulli losses with parameters $\{p_i\}_{1\leq i\leq L}$ was upper bounded by $\frac{k}{p}+\sum_{i\neq m} \frac{1-p}{p_i-p}$, where $p = \min_i p_i$ is the unique minimum and $m = \arg\min_i p_i$. This result was derived under the unrealistic assumption that all the coded packets are innovative.

Related to this result, Theorem~\ref{thm:DenseCodesRegularBernoulliAverageNon-Identical} indicates that the average coding delay of dense codes over line networks with non-identical links is upper bounded by $\frac{1}{p}(k+(1+o(1))(p\log(kL)f(k)))$, where $f(k)$ goes to infinity arbitrarily slow, as $k$ goes to infinity. It is important to note that Theorem~\ref{thm:DenseCodesRegularBernoulliAverageNon-Identical} does not have the limiting assumption of the result of~\cite{DDHE:2009} regarding the innovation of all the packets. The bound of Theorem~\ref{thm:DenseCodesRegularBernoulliAverageNon-Identical} is larger than that of~\cite{DDHE:2009}, which is expected, since the former, unlike the latter, is derived based on the realistic assumption of operating over a finite field, which has the consequence that not all the coded packets are innovative.

The result of Theorem~\ref{thm:DenseCodesRegularBernoulliActualNon-Identical} indicates that the coding delay is upper bounded by $\frac{1}{p}(k+(1+o(1))(k^2 L\log(kL))^{1/3})$. This is while, in~\cite{DDHE:2009}, the coding delay is upper bounded by $\frac{1}{p}(k+O(k^{3/4}))$. This bound is looser than the bound in Theorem~\ref{thm:DenseCodesRegularBernoulliActualNon-Identical}, although it is derived under the same limiting assumption as the one used in~\cite{DDHE:2009} for the average coding delay (i.e., all coded packets being innovative). Such an assumption makes the bound appear smaller than what it would be at the absence of the assumption. This demonstrates the strength of the bounding technique used in this work.

Similar to the case of identical links, in the case of non-identical links, by combining Theorems~\ref{thm:DenseCodesRegularBernoulliActualNon-Identical} and~\ref{thm:DenseCodesRegularBernoulliAverageNon-Identical}, it can be seen that the coding delay might be much larger than the average coding delay. In fact, the difference might be even larger than that of the identical links.

\bibliographystyle{IEEEtran}
\bibliography{RefsII}





\end{document}